%% file: main.tex
\documentclass[letterpaper,twocolumn,10pt]{article}
\usepackage{usenix}

\input{header}

\begin{document}

\date{}

\title{ZipPIR: High-throughput Single-server PIR without Client-side Storage}

\author{
{\rm Rasoul Akhavan Mahdavi}
\and
{\rm Abdulrahman Diaa}
\and
{\rm Florian Kerschbaum}
\and
University of Waterloo\\
\{rasoul.akhavan.mahdavi, abdulrahman.diaa, florian.kerschbaum\}@uwaterloo.ca
} 

\maketitle

%
\begin{abstract}
  Private Information Retrieval (PIR) allows a client to privately access a database without revealing which element is accessed.
  Initial PIR protocols based on Ring Learning with Errors (RLWE) demonstrated the practicality of PIR, but achieve limited throughput.
  Alternatively, high-throughput protocols leverage an offline phase that requires substantial client-side storage (e.g., hints in SimplePIR) or involve prohibitive communication costs during the offline phase (e.g., Piano).
  These limitations conflict with the practical constraints of resource-limited clients and are further exacerbated by dynamic databases, where updates necessitate costly regeneration and retransmission of hints.

  To address these challenges, we propose \protocol{}, a high-throughput PIR protocol that compresses LWE ciphertexts into significantly smaller Paillier ciphertexts.
  \protocol{} leverages the offline phase to obtain this size reduction without incurring the associated computational cost in the online phase.
  Moreover, under computational assumptions, \protocol{} features an almost silent offline phase, requiring no communication beyond an initial public key, enabling the server to independently generate and update hints during idle times without client interaction.
  \protocol{} achieves over 2 GB/s of throughput — comparable to state-of-the-art protocols such as SimplePIR — without the need for a large client-stored hint.
  For PIR over a 1 GB database, \protocol{} has up to 10x higher throughput than existing protocols with no client-side storage, while requiring less than 200 KB of server-side storage per client, significantly enhancing scalability for practical deployments.
  While prior PIR protocols using Paillier are very inefficient, \protocol{} is the first PIR protocol using Paillier that achieves throughput that is competitive with state-of-the-art PIR protocols.
\end{abstract}

\input{introduction}
\input{background}
\input{constructions}
\input{zippir}
\input{evaluation}

\input{discussion}
\input{applications}
\input{related-work}
\input{related-work-pir}

\section{Conclusion}
In this work, we proposed \protocol{}, a client-efficient single-server PIR protocol.
\protocol{} is designed to minimize client-side storage requirements, while maintaining high throughput.
After sending an initial setup, the offline phase can be performed silently without any interaction with the client or client-side computation or storage.
By delegating all expensive offline computation and storage to the server, database updates can be handled silently, without communication with all clients.
At the heart of our construction is a new technique for compressing LWE ciphertexts into extremely small additive ciphertexts.
Our compression technique may be of independent interest and is applicable in many applications which send (R)LWE ciphertexts over the network, not only PIR.
Our evaluation shows that our compression can achieve up to 99\% size reduction for LWE ciphertexts.
Moreover, our evaluation of our \protocol{} shows how we can achieve over 3 GB/s of throughput in the online phase, without imposing any client-side burden.
Our works demonstrates that PIR using additive schemes such as Paillier, which were thought to be inefficient, can be practical and competitive with PIR schemes which are purely based on lattice-based schemes.

\section*{Ethical Considerations}

In this work, the main goal is to improve the efficiency of Private Information Retrieval (PIR), a privacy-enhancing technology that enables users to retrieve data without revealing their specific items of interest. We have identified several key stakeholders and evaluated the potential impacts of our research on each:

\paragraph{Potential Stakeholders}
\begin{itemize}\itemsep0mm
    \item \textbf{End-Users of digital services.} The primary stakeholders are individuals seeking to use digital services. By making PIR easier to deploy in real-world applications (such as password leak checks or SCT auditing), this research empowers users to verify their security status without surrendering sensitive metadata to service providers. The benefit to these users is removing the trade-off between utility and privacy.
    \item \textbf{Service Providers.} Improving efficiency lowers the computational and financial barriers for organizations to adopt privacy-preserving protocols.
    \item \textbf{Potential Misusers (Negative Stakeholders).} We acknowledge that PIR may allow bad actors to attempt to access data while using the protocol’s privacy properties to mask illegal activities.
\end{itemize}

\paragraph{Balancing Benefits and Risks}
Our decision to conduct and publish this research was motivated by the significant barriers currently preventing the widespread adoption of PIR. We weighed the potential for misuse against the tangible benefits of empowering users to remain safe on the internet while maintaining their privacy. Because PIR only protects the query and not the authorization (i.e., it does not bypass access control mechanisms already in place by the data provider), we believe the risk of masking illegal activity is mitigated by existing security layers.

Ultimately, we believe the potential gains—evidenced by current real-world deployments such as Password Checkup—outweigh the potential negative outcomes. We are unaware of any further ethical considerations or overlooked stakeholders regarding the improved efficiency of PIR.

\section*{Open Science}

We open-source our implementation of our proposed protocols and the scripts used to perform our experimental evaluation such that it may be evaluated for both functionality and reproducibility.~\footnote{\url{https://zenodo.org/records/17907225}}

\bibliographystyle{plain}
\bibliography{references}

\input{appendix}

\end{document}

%% file: header.tex
\usepackage{algorithm}
\usepackage[noend]{algpseudocode}

\makeatletter
\renewcommand{\ALG@beginalgorithmic}{\scriptsize}

\makeatother

\usepackage{booktabs}
\usepackage{amsfonts, mathtools}
\usepackage{amsthm}
\usepackage{ifthen}
\usepackage{amsmath}
\usepackage[lambda ,advantage, operators, sets , adversary, landau, probability, notions, logic, ff , mm, primitives, events, complexity, asymptotics, keys]{cryptocode}
\usepackage{xurl}
\usepackage{multirow}
\usepackage{multicol}
\usepackage{graphicx}
\usepackage{rotating}
\usepackage{colortbl}
\usepackage{makecell}
\usepackage{tikz}
\usepackage{caption}
\usepackage{subcaption}
\usepackage{diagbox}
\usepackage[first=0,last=9]{lcg}
\usepackage{stfloats}
\usepackage{diagbox}
\usepackage{thmtools}
\usepackage{adjustbox}
\usepackage{xcolor}

\usepackage{pgfplots}
\pgfplotsset{compat=1.15}

\newcommand{\diaa}[2][]{\ifthenelse{\equal{#1}{}}{}{ \textcolor{red}{\sout{#1}}} \textcolor{blue}{D: #2}}
\renewcommand{\fk}[2][]{\ifthenelse{\equal{#1}{}}{}{ \textcolor{red}{\sout{#1}}} \textcolor{orange}{FK: #2}}
\newtheorem{theorem}{Theorem}
\newtheorem{definition}{Definition}
\newtheorem{proposition}{Proposition}
\newtheorem{corollary}{Corollary}

\newcommand{\round}[1]{\lfloor #1 \rceil}
\newcommand{\db}{\pckeystyle{db}}
\newcommand{\qu}{\pckeystyle{qu}}
\newcommand{\ck}{\pckeystyle{ck}}
\newcommand{\eck}{\pckeystyle{eck}}
\newcommand{\pck}{\pckeystyle{pck}}
\newcommand{\ct}{\pckeystyle{ct}}
\newcommand{\pt}{\pckeystyle{pt}}
\newcommand{\ans}{\pckeystyle{ans}}

\newcommand{\hint}{\textbf{H}}

\newcommand{\protocol}{ZipPIR}

\newcommand{\addkey}{\sk_{A}}
\newcommand{\paillierkey}{\sk_{P}}
\newcommand{\paillierpk}{\pk_{P}}
\newcommand{\lwekey}{\sk}

\newcommand{\mask}{\textbf{A}}

\newcommand{\appendixsectionname}{Appendix}
\newcommand{\appsection}[1]{\appendixsectionname~\ref{#1}}

\usepackage{stackengine}
\stackMath

\usepackage{cleveref}

\definecolor{MyGreen}{RGB}{184,223,184}
\definecolor{MyRed}{RGB}{244,187,187}

%% file: introduction.tex
    \section{Introduction}

\begin{table}[h!]
    \caption{Performance of PIR over a 1 GB database for different protocols. 
    $*$ Results for a 1 GB database are extrapolated from the paper. 
    $\dagger$ The only client-side storage is a long-term secret key.}
    \label{tab:comparison}
    \centering
    \resizebox{\columnwidth}{!}{
    \begin{tabular}{c|c|c|c|c|c}
    \toprule
    Property
    & \makecell{
        PIR-with-keys$^\dagger$ \\ 
        \cite{menonSPIRALFastHighRate2022, angelPIRCompressedQueries2018a, aliCommunicationComputationTradeoffs2021a}}
    & \makecell{
        Stateful PIR$^*$ \\ 
        \cite{PrivateStatefulInformation}}
    & \makecell{
        LWE-based PIR \\ (SimplePIR~\cite{henzingerOneServerPrice2023,davidsonFrodoPIRSimpleScalable2023})}
    & \makecell{Sublinear PIR \\ 
              (Piano \cite{piano})}
    & \textbf{\protocol{}}$^\dagger$ \\
    \midrule
    Client-side Storage & \cellcolor{MyGreen} - &  1 MB & \cellcolor{MyRed} 128 MB & \cellcolor{MyRed} 66 MB & \cellcolor{MyGreen} - \\
    Offline Communication & 600 KB - 13 MB & \cellcolor{MyRed} 1 GB & \cellcolor{MyRed} 128 MB & \cellcolor{MyRed} 1 GB & \cellcolor{MyGreen} 400 B \\
    Throughput & \cellcolor{MyRed} < 1 GB/s & \cellcolor{MyRed} < 500 MB/s & \cellcolor{MyGreen} 10 GB/s & \cellcolor{MyGreen} 126 GB/s & \cellcolor{MyGreen} 3 GB/s \\
    \bottomrule
    \end{tabular}
    }
\end{table}


Private Information Retrieval (PIR) enables a client to privately retrieve an element from a database without revealing to the server which element was accessed.
PIR is an essential functionality for privacy-enhancing applications such as compromised credential checks~\cite{liProtocolsCheckingCompromised2019a}, private contact discovery~\cite{Thomas2019ProtectingAF}, metadata-private applications~\cite{ahmadAddraMetadataprivateVoice2021}, private blocklist lookups~\cite{koganPrivateBlocklistLookupsa}, private STC auditing~\cite{henzingerOneServerPrice2023}, private messaging~\cite{199325}, and private distributed file systems~\cite{mazmudar2024peer2pirprivatequeriesipfs}.

Existing PIR protocols operate in the multi-server or single-server model.
Multi-server PIR, while being an efficient solution, relies on a strong assumption that the servers hold identical copies of the database and do not collude.
In contrast, single-server PIR~\cite{chorPrivateInformationRetrieval1998a}, where a single server holds the database, is compelling for the applications mentioned above since ensuring multiple servers do not collude can be challenging in practice.
Moreover, maintaining a consistent view of a dynamic database among servers requires additional coordination.
Hence, in this paper, we focus our attention on single-server PIR schemes.

Clients in the applications mentioned above typically have limited storage and computational resources, e.g., browsers and smartphones.
Hence, we can classify existing protocols based on the burden they place on the client.


\textbf{PIR using client-specific keys.}
This approach to PIR originated from XPIR~\cite{aguilar-melchorXPIRPrivateInformation2016} and was followed by recent works~\cite{angelPIRCompressedQueries2018a, aliCommunicationComputationTradeoffs2021a, mugheesOnionPIRResponseEfficient2021, ahmadAddraMetadataprivateVoice2021, menonSPIRALFastHighRate2022}.
While these protocols require no client-side storage, they still suffer from low throughput, up to hundreds of megabytes per second~\cite{menonSPIRALFastHighRate2022}.
The primary reason for this low throughput is that these protocols require a large number of public-key operations, usually linear in the database size.

\textbf{PIR with client storage.}
Private stateful information retrieval~\cite{patelPrivateStatefulInformation2018} reduces server computation but requires clients to maintain a large, database-dependent private state, which is difficult for a resource-constrained client.
Additionally, the hint must be updated when the database is updated.
More recent high-throughput protocols based on Learning With Errors (LWE), such as SimplePIR~\cite{henzingerOneServerPrice2023}, FrodoPIR~\cite{davidsonFrodoPIRSimpleScalable2023}, and DoublePIR~\cite{henzingerOneServerPrice2023}, achieve high throughput that approaches memory bandwidth, requiring only simple matrix-vector multiplication in the online phase.
However, they require clients to store large hints (128 MB for SimplePIR) that must be updated whenever the database changes, making them impractical for real-world scenarios.
Another paradigm takes throughput optimization to the extreme by proposing protocols with sublinear online time~\cite{piano, corrigan-gibbsSingleServerPrivateInformation2022, thorpir, 10.1145/3658644.3690266}.
Such protocols perform offline preprocessing such that the online work is only sublinear in the database size, which is a substantial improvement for very large databases.
However, these protocols impose impractical client-side requirements.
For instance, Piano~\cite{piano} requires streaming the entire database to the client during preprocessing, while other approaches~\cite{corrigan-gibbsSingleServerPrivateInformation2022, thorpir, 10.1145/3658644.3690266} demand substantial client-side storage and computation, contradicting the resource constraints of limited client devices.

\Cref{tab:comparison} summarizes the limitations of existing work.
Given these limitations in existing approaches, we pose the following question:
\begin{center}
    \textit{Can we construct a high-throughput PIR protocol without placing storage burden on the client?}
\end{center}

In this paper, we answer this question affirmatively by introducing \protocol{}, a new PIR protocol that achieves both high throughput and practical client-side requirements. 
Compared to existing PIR protocols that fulfill the mentioned requirements such as HintlessPIR and YPIR, we show that \protocol{} is up to 2.2x faster.
Moreover, \protocol{} is the first PIR protocol using Paillier that achieves throughput that is competitive with state-of-the-art PIR protocols.

\subsection{Technical Overview}

At the core of our approach is a novel technique for compressing homomorphic ciphertexts based on LWE and Ring-LWE. While our approach is independently applicable to protocols using homomorphic encryption, we apply it to address the challenges of PIR.

\textbf{Compressing LWE Ciphertexts.}
(R)LWE ciphertexts are commonly sent over the network in modern single-server PIR protocols~\cite{henzingerOneServerPrice2023, menonSPIRALFastHighRate2022, menonYPIRHighThroughputSingleServer2024, angelPIRCompressedQueries2018a} and other applications that use homomorphic encryption~\cite{akhavanmahdaviLevelPrivateNonInteractive2023, chenLabeledPSIFully, chenLabeledPSIFully2018, mahdaviPEPSIPracticallyEfficient2023}.
However, they suffer from high ciphertext expansion factors, which is the ratio of ciphertext size to its corresponding plaintext size.
This expansion factor directly impacts communication costs in client-server architectures such as PIR~\cite{regevLatticesLearningErrors2009, lyubashevskyIdealLatticesLearning2010}, through both query uploads (client-generated ciphertexts) and response downloads (ciphertexts processed by the server).
While there are effective methods to reduce upload costs~\cite{aliCommunicationComputationTradeoffs2021a,choTranscipheringFrameworkApproximate2021a,dobraunigPastaCaseHybrid2021,dobraunigRastaCipherLow2018,albrechtCiphersMPCFHE2015,canteautStreamCiphersPractical2018}, the same techniques can not be applied to reduce download costs.
Previous efforts to address ciphertext size include switching to smaller parameters in LWE-based schemes~\cite{naehrigCanHomomorphicEncryption2011a}, performing modulus switching in RLWE-based schemes~\cite{MicrosoftSEALRelease2023, cheonHomomorphicEncryptionArithmetic2017a}, and LWE-to-RLWE packing~\cite{chenEfficientHomomorphicConversion2021}. 
However, an LWE ciphertext size remains proportional to $O(n\log q)$, where $n$ and $q$ represent the ciphertext dimension and modulus, respectively.

Our proposed compression technique reduces the size of and (R)LWE ciphertext by exploiting the linear step in the decryption of (R)LWE ciphertexts, which reduces a ciphertext along its dimension, $n$. 
Since this step is linear, it can be performed homomorphically on the server side using an additive homomorphic encryption scheme (e.g., Paillier~\cite{paillierPublicKeyCryptosystemsBased1999}), allowing the server to return a single additive ciphertext to the client.
This compression technique achieves size reduction of approximately 89\% (from 6.8KB to 768 B) for a single LWE ciphertext and up to 99\% for a batch of LWE ciphertexts.
The only additional requirement is a small, reusable compression key — the encrypted (R)LWE secret under the additive scheme.

\textbf{Constructing \protocol{}.}
Utilizing this compression technique, \protocol{} addresses the main bottleneck of the high-throughput approach of SimplePIR~\cite{henzingerOneServerPrice2023}: the large client-side hint.
However, a direct application of compression would result in prohibitive computational costs due to expensive Paillier operations.
PIR using additive schemes such as Paillier have been known to be very slow due to these expensive operations~\cite{PrivateStatefulInformation}.

\protocol{} employs an offline/online paradigm to push all expensive Paillier operations to the offline phase that can occur during the server's idle time.
We use an alternative representation for Paillier ciphertexts which enables an offline/online split, whilst maintaining security.
Importantly, under computational assumptions, the client does not need to communicate with the server during the offline phase, except for a one-time Paillier public key and cryptographic seed.
This leaves the online phase with only two matrix-vector multiplications, which are sufficient to achieve throughput comparable to SimplePIR.
Since the second matrix multiplication is over a large modulus, we represent the operands in RNS form to enable the use of efficient machine-level operations.
Importantly, upon any database updates, the server can redo the precomputation without any communication with the client.
While this requires the server to store a per-client state, the nature of this state in \protocol{} differs fundamentally from related work and is novel in the literature.
Other protocols store multiple megabytes of static, per-client information that does not require updates over time.
In contrast, \protocol{} maintains a much smaller per-client state—only a few hundred kilobytes—which is refreshed per query, though this update can also be performed silently, i.e., without interaction with the client, and during the server's idle time.
While using an offline phase is very common for PIR protocols using lattice-based encryption schemes, our work is the first to propose a practical offline/online protocol using Paillier.

\textbf{Summary of Results.}
Our results demonstrate that \protocol{} achieves a throughput of 3 GB/s for a 1GB database and 3.5 GB/s for a 2GB database, while maintaining only a 120 and 170 KB per-client per-query state, respectively.
\protocol{}'s throughput is up to 10x higher than that of protocols using client-specific keys.
Aside from competitive runtime, our work is the first to demonstrate that PIR using additive schemes such as Paillier can be efficient and competitive with pure lattice-based schemes.


%% file: background.tex
\section{Background}
\label{sec:background}

In this section and throughout the paper, we index the i$^{th}$ element of the vector $\textbf{a}$ as $\textbf{a}[i]$.
We also define $[n]=\{0,1,\cdots,n-1\}$ and $\lfloor\cdot\rceil$ denotes rounding to the nearest integer.
$x\leftarrow D$ denotes the variable $x$ sampled from a distribution $D$ and $x\sample S$ denotes sampling $x$ uniformly from a set $S$.

\subsection{Homomorphic Encryption}
Homomorphic Encryption (HE) is a form of public-key cryptography which permits computation on messages while in encrypted form, without the need to access the secret key. Similar to other public-key cryptosystems, homomorphic ciphertexts are larger than the underlying plaintext.
The ratio between the ciphertext and plaintext is denoted as the \textit{expansion factor}.


\subsubsection{LWE \& RLWE ciphertexts}
\label{sec:lwe}

For this work, we describe a simple version of an encryption system based on the Learning With Errors (LWE)~\cite{regevLatticesLearningErrors2009} assumption which we will denote by $\mathcal{E}_{\text{LWE}}$. The most prominent encryption schemes that have ciphertexts of this format are Regev~\cite{regevLatticesLearningErrors2009}, FHEW~\cite{ducasFHEWBootstrappingHomomorphic2015}, and TFHE(CGGI)~\cite{chillottiTFHEFastFully2020}.

$\mathcal{E}_{\text{LWE}}$ uses the following parameters:
dimension $n$, ciphertext modulus $q$, plaintext modulus $p$, noise scale $\Delta=\round{q/p}$, a discrete error distribution over $\ZZ_q$ called $\chi$. We sample the secret key, $\texttt{sk}$, from $\ZZ_q^{n}$. The encryption and decryption procedure for $\mathcal{E}_{LWE}$ is shown in \Cref{alg:lwe-encrypt-decrypt}.

\begin{algorithm}[H]
     \caption{Encryption and Decryption of $\mathcal{E}_{LWE}$}
     \label{alg:lwe-encrypt-decrypt}
     \begin{algorithmic}[1]
        \vspace{1mm}
        \Procedure{LWEEncrypt}{$\texttt{sk}, \mu$}
        \State Sample $\textbf{a}\xleftarrow{\$} \ZZ_q^{n}$ and $e \leftarrow \chi$
        \vspace{1mm}
        \State $b = \sum_{i\in[n]} \textbf{a}[i] \cdot \texttt{sk}[i] + \Delta \cdot \mu + e \mod q$
        \vspace{1mm}
        \State \Return $\ct=(\textbf{a},b)$
        \EndProcedure
        \vspace{2mm}
        \Procedure{LWEDecrypt}{$\texttt{sk}, \ct=(\textbf{a},b)$}
        \State $\mu^* = \left(b - \sum_{i\in[n]} \textbf{a}[i] \cdot \texttt{sk}[i]\right) \mod q $
        \vspace{1mm}
        \State $\mu'=\lfloor\mu^* / \Delta\rceil$
        \vspace{1mm}
        \State \Return $\mu'$
        \EndProcedure

        

        
        
     \end{algorithmic}
\end{algorithm}

Fresh ciphertexts can be compressed to reduce network costs. Since $\textbf{a}$ is sampled at random, we can send the seed used to generate $\textbf{a}$ instead of the vector itself. Concretely, instead of sending $c=(\textbf{a},b)$, the client can produce $\bar{c}=(\theta,b)$ where $\theta\leftarrow\{0,1\}^{\lambda}$ is the seed of a cryptographically secure PRG used to generate $\textbf{a}$, i.e., $\textbf{a}\leftarrow \texttt{PRG}(\theta)$. With this technique, fresh ciphertexts are only $\lambda+\log_2 q$ bits instead of $n\log_2 q$.

Similar to LWE, we can also construct an encryption scheme based on the Ring Learning with Errors (RLWE)~\cite{lyubashevskyIdealLatticesLearning2010} assumption.
Cryptosystems such as BGV~\cite{brakerskiLeveledFullyHomomorphic2012}, BFV~\cite{brakerskiFullyHomomorphicEncryption2012,fan2012somewhat}, and CKKS~\cite{cheonHomomorphicEncryptionArithmetic2017a} have ciphertexts of a similar format.
More details regarding RLWE cryptosystems are provided in \Cref{sec:rlwe-compressing}.


\subsubsection{Paillier Cryptosystem}
The Paillier cryptosystem~\cite{paillierPublicKeyCryptosystemsBased1999} is an additive cryptosystem based on the computational intractability of the quadratic residuosity problem.
The plaintext space of Paillier is $\ZZ_m$ for $m=pq$, for two secret primes $p$ and $q$ and a Paillier ciphertext is an element $c\in\ZZ_{m^2}$.
The cryptosystem supports homomorphic additions between two ciphertexts, and plaintext multiplications.
Throughout this paper, we denote Paillier addition and plaintext multiplication as $\oplus$ and $\otimes$, respectively.
We also define $\textsc{PaillierMatMul}_m(C,P)$ which denotes a matrix multiplication between a matrix of ciphertexts $X\in\ZZ_{m^2}^{a\times b}$ and matrix of plaintext values $P\in\ZZ_m^{b\times c}$.

\subsection{Private Information Retrieval}
Private Information Retrieval (PIR) is a protocol where a client retrieves an element from a database such that the query is not revealed.
A specific variant of PIR dubbed \textit{Offline/Online PIR} aims to push the costly operations to the offline phase, such that the online phase is very efficient.
We rely on the definition proposed by Corrigan-Gibbs and Kogan~\cite{corrigan-gibbsSingleServerPrivateInformation2022}, where an offline/online PIR scheme is a tuple $\Pi = (\textsc{Setup}, \textsc{Hint}, \textsc{Query}, \textsc{Answer}, \textsc{Reconstruct})$ of five algorithms:

\begin{itemize}\itemsep0mm
    \item $(\ck, q_h) \leftarrow \textsc{Setup}(1^\lambda, N)$ : a randomized algorithm that outputs a client key ($\ck$) and hint request ($q_h$), given the security parameter and database size.
    \item $h \leftarrow \textsc{Hint}(\db, q_h)$ : a deterministic algorithm that outputs a client-specific hint, given the database and the hint request.
    \item $\qu \leftarrow \textsc{Query}(\ck, i)$ : a randomized algorithm that generates a query, given the client key and a desired index.
    \item $\ans \leftarrow \textsc{Response}(\db, \qu)$ : a deterministic algorithm that produces an answer, given the query and the database.
    \item $d \leftarrow \textsc{Extract}(h, \ans)$ : a deterministic algorithm that outputs a final response, given the hint and the answer.
\end{itemize}


\begin{definition}[Correctness]
    A PIR protocol with preprocessing consisting of the five aforementioned routines is $\delta$-correct, if for a domain $\mathcal{D}$, any database $\db\in \mathcal{D}^{N}$ and any $i\in[N]$,
    \begin{align}
    \PP\left[\db[i] = f \middle|
        \begin{array}{r c l} 
            (\ck, q_h) & \leftarrow & \textsc{Setup}(1^\lambda, N) \\
            h & \leftarrow & \textsc{Hint}(\db, q_h) \\
            \qu & \leftarrow & \textsc{Query}(\ck, i) \\
            \ans & \leftarrow & \textsc{Response}(\db, \qu) \\
            f & \leftarrow & \textsc{Extract}(h, \ans)
        \end{array}
    \right] > 1-\delta
    \end{align}
\end{definition}

Intuitively, the two pieces of information that the server observes, $q_h$ and $\qu$, should not reveal any information about the client's query.
This is formalized in the following definition for single-server security of offline/online PIR~\cite[Definition 46]{10.1007/978-3-030-45721-1_3}:

\begin{definition}[Security]
     A PIR protocol is $\epsilon$-secure if for any PPT adversary $\adv$ and any $i,j\in[N]$,

\scalebox{0.85}{%
\begin{minipage}{\columnwidth}    
    \begin{align*}
        \left| \PP\left[\adv(1^N, (q_h, \qu)) = 1 \, \middle| \,
            \begin{array}{c}
                (\ck, q_h) \leftarrow \textsc{Setup}(1^\lambda, N) \\
                \qu \leftarrow \textsc{Query}(\ck, i)
            \end{array}
        \right] \right.\\
        \left. {} - \PP\left[\adv(1^N, (q_h, \qu)) = 1 \, \middle| \,
            \begin{array}{c}
                (\ck, q_h) \leftarrow \textsc{Setup}(1^\lambda, N) \\
                \qu \leftarrow \textsc{Query}(\ck, j)
            \end{array}
        \right] \right| \leq \epsilon
    \end{align*}
\end{minipage}
}

\end{definition}

%% file: constructions.tex
\section{Additive HE for Smaller FHE Responses}
\label{sec:main}

We propose a technique to compress LWE/RLWE ciphertexts using auxiliary information provided by the client.

\textbf{Exploiting Linear Phase Evaluation.}
In LWE and RLWE decryption, we compute an intermediate value which is commonly referred to as the \textit{phase}, i.e., $\mu^*$ in \Cref{alg:lwe-encrypt-decrypt}.
Phase evaluation is linear and the phase is much smaller than the ciphertext itself.
The main insight behind our solution is to homomorphically compute the phase on the server using encrypted values of the secret key, encrypted under an additive encryption scheme.
Since the phase is much smaller than the original ciphertext, this results in a smaller response size.
In general, our technique can be applied to any encryption scheme that has a linear phase evaluation. Examples of encryption schemes with this property are Regev~\cite{regevLatticesLearningErrors2009}, FHEW~\cite{ducasFHEWBootstrappingHomomorphic2015}, TFHE~\cite{chillottiTFHEFastFully2020}, BFV~\cite{fan2012somewhat,brakerskiFullyHomomorphicEncryption2012}, and BGV~\cite{brakerskiLeveledFullyHomomorphic2012}.

In the following section, we describe the procedure for compressing LWE ciphertexts and follow with additional optimizations to this procedure.
In \Cref{sec:rlwe-compression}, we describe compression for RLWE-based ciphertexts as well.

\textbf{The Additive Encryption Scheme.}
For the compression protocol, we require an additive encryption scheme which we denote $\mathcal{E}_A$ such that the plaintext space is $\ZZ_m$, for some $m$.
Also, denote the ciphertext space of $\mathcal{E}_A$ as $\mathcal{C}$.
$\mathcal{E}_A$ supports addition and plaintext multiplications.
We denote addition and plaintext multiplication with $\oplus$ and $\otimes$, respectively.
Moreover, denote the secret key generated by $\mathcal{E}_A$ as $\addkey$ and the corresponding encryption and decryption algorithms as $\texttt{AEnc}$ and $\texttt{ADec}$.

Paillier~\cite{paillierPublicKeyCryptosystemsBased1999}, Damgard-Jurik~\cite{damgardGeneralisationSimplificationApplications2001a}, Exponential ElGamal~\cite{elgamalPublicKeyCryptosystem1985}, and Benaloh~\cite{benalohDenseProbabilisticEncryption1994} are examples of cryptosystems that can be used for this purpose.

\subsection{Compressing LWE Ciphertexts}

The ciphertext compression algorithm for LWE and the corresponding modified decryption algorithm are given in \Cref{alg:lwe-compress}.

\begin{algorithm}[H]
    \caption{LWE compression, performed by the server and the corresponding modified decryption process, performed by the client over a compressed ciphertext. The compression key $\ck\in\mathcal{C}^{n}$ is such that $\ck[i]=\texttt{AEnc}(\addkey, \sk[i])$.}
    \label{alg:lwe-compress}
    \begin{algorithmic}[1]
        \Procedure{LWECompress$_{q}$}{$\ck, \ct=(\textbf{a},b)$}
        \Comment{$\ct\in\ZZ^{n}\times\ZZ$}
        \State $x=b$
        \For{$i \in [n]$} 
            \State $x \leftarrow x \oplus \left( (q-\textbf{a}[i]) \otimes \ck[i] \right)$\label{alg:line-multiply}
        \EndFor
        \State \Return $x$ \Comment{$\mu^*=\texttt{ADec}(\addkey, x)$}
        \EndProcedure
    \vspace{3mm}
    \Procedure{ModifiedLWEDecrypt$_{q, p}$}{$\addkey,x$}
	 	\State $\mu^{**} = \texttt{ADec}(\addkey, x) \mod q$
            \label{alg:lwe-mod-decrypt}
	 	\vspace{1mm}
	 	\State $ \mu'' = \lfloor \mu^{**}/\Delta\rceil$
            \Comment{$\Delta=\round{q/p}$}
            \vspace{2mm}
            \State \Return $\mu'' \in \ZZ_{p}$
   \EndProcedure
   \end{algorithmic}
\end{algorithm}

\begin{theorem}[Correctness]
\label{thm:lwe-compress-correct}
    For an LWE ciphertext $\ct\in\ZZ_q^{n+1}$, if $m>q+nq^2$, then $\textsc{LWECompress}_q$ produces a compressed ciphertext which decrypts to the correct message if decrypted using \textsc{ModifiedLWEDecrypt}. More formally, if 

\begin{align*}
    x\leftarrow\textsc{LWECompress}_{q}(\ck, \ct) \\
    \mu'' \leftarrow \textsc{ModifiedLWEDecrypt}_{q,p}(\addkey, x)
\end{align*}
then
$\mu'' = \textsc{LWEDecrypt}(\sk, \ct)$
\end{theorem}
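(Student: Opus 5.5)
The plan is to track the integer value that the additive ciphertext $x$ encrypts. We then show this integer never wraps around modulo $m$, so that reducing it modulo $q$ recovers exactly the phase $\mu^*$ computed by \textsc{LWEDecrypt}. Throughout, we view $\textbf{a}[i]$, $b$ and $\sk[i]$ as their canonical representatives in $\{0,\dots,q-1\}$; the compression key satisfies $\ck[i]=\texttt{AEnc}(\addkey,\sk[i])$ with these representatives as elements of $\ZZ_m$.

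First, we use the homomorphic properties of $\mathcal{E}_A$ (correctness of $\oplus$ and $\otimes$ modulo $m$). By induction over the loop in \textsc{LWECompress}$_q$, after processing all $n$ coordinates we get
\[
\texttt{ADec}(\addkey,x) \equiv X := b + \sum_{i\in[n]} (q-\textbf{a}[i])\cdot \sk[i] \pmod m .
\]
Here the initialisation $x=b$ is understood as an encryption of $b$, or as the plaintext $b$ absorbed by a homomorphic plaintext-addition.

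Second, we bound $X$ as an ordinary integer:
\[
0\le X\le (q-1) + n\cdot q\cdot (q-1) < q+nq^2 < m .
\]
The bound uses $0\le b\le q-1$, $1\le q-\textbf{a}[i]\le q$, and $0\le \sk[i]\le q-1$. Since $0\le X<m$, decryption returns exactly $X$, with no reduction modulo $m$ taking place.

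Third, we reduce modulo $q$. Since $q-\textbf{a}[i]\equiv -\textbf{a}[i]\pmod q$, we get $\mu^{**}=X\bmod q = (b-\sum_i \textbf{a}[i]\sk[i])\bmod q=\mu^*$. Hence the rounding steps $\lfloor\mu^{**}/\Delta\rceil$ and $\lfloor\mu^*/\Delta\rceil$ coincide, and $\mu''=\textsc{LWEDecrypt}(\sk,\ct)$.

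The main point to get right is the second step, the no-wraparound bound. This is where the hypothesis $m>q+nq^2$ enters, and it requires the representative choices to be fixed consistently: the encrypted secret must lie in $[0,q)$, and the multipliers $q-\textbf{a}[i]$ must be nonnegative integers at most $q$. If $\sk$ were encoded with negative values mod $m$, the argument would instead need a centered-interval bound. I would state this convention explicitly at the start.
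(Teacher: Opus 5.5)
Your proposal is correct and follows the paper's proof essentially step for step: you bound the integer $b+\sum_{i\in[n]}(q-\textbf{a}[i])\cdot\sk[i]$ by $q+nq^2<m$, so decryption involves no reduction modulo $m$, and then you reduce modulo $q$ to recover the phase $\mu^*$, after which the rounding steps coincide. Fixing representatives in $[0,q)$ explicitly and addressing the plaintext initialisation $x=b$ are small clarifications that the paper leaves implicit.
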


\begin{proof}
In the \textsc{ModifiedLWEDecrypt}$_{q,p}$ procedure (\Cref{alg:lwe-mod-decrypt} of \Cref{alg:lwe-compress}), we calculate 
$b + \sum_{i\in[n]} (q-\textbf{a}[i]) \cdot \sk[i]$, encrypted under additive encryption, which is achievable due to the linear properties of the additive encryption. We know that $\sk[i], \textbf{a}[i]$ and $b$ are elements in $\ZZ_q$ so $0 \leq \sk[i], \textbf{a}[i], b < q$ and 

{\footnotesize
\begin{align}
    b + \sum_{i\in[n]} (q-\textbf{a}[i]) \cdot \sk[i] \leq q + \sum_{i\in[n]} q  \cdot q = q + nq^2 < m .
\end{align}
}

so there is no overflow in the plaintext space of the additive ciphertext.
In $\textsc{ModifiedLWEDecrypt}_{q,p}$ (\Cref{alg:lwe-compress}), we have

{\footnotesize
\begin{align*}
    \mu^{**}\mod q &= \texttt{ADec}(\addkey, x) \mod q \\
    &= \left((b + \sum_{i\in[n]} (q-\textbf{a}[i]) \cdot \sk[i]) \mod m \right) \mod q \\
    &= \left(b + \sum_{i\in[n]} (q-\textbf{a}[i]) \cdot \sk[i] \right) \mod q \\
    &= b - \sum_{i\in[n]} \textbf{a}[i] \cdot \sk[i] \mod q
\end{align*}
}

This is identical to $\mu^*$ in line 1 of \Cref{alg:lwe-encrypt-decrypt}, hence, since the subsequent steps of \textsc{LWEDecrypt} and \textsc{ModifiedLWEDecrypt} are identical, they produce the same response, and the theorem is proven.
\end{proof}

In cryptosystems such as TFHE~\cite{chillottiTFHEFastFully2020}, the secret key is sampled from a binary distribution.
In such a case, we can tighten the inequality required in \Cref{thm:lwe-compress-correct} for correctness because $0\leq\sk[i]\leq 1$. The following corollary summarizes this fact.

\begin{corollary}
    If the LWE secret key is binary and $m>q+nq$, \textsc{LWECompress} produces a compressed ciphertext which decrypts to the correct message if decrypted using \textsc{ModifiedLWEDecrypt}.
\end{corollary}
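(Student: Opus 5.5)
The plan is to rerun the proof of \Cref{thm:lwe-compress-correct} almost verbatim, changing only the overflow bound that uses the range of the secret key. As there, the additive homomorphism gives that $x$ produced by $\textsc{LWECompress}_q$ encrypts
\[
S \;=\; b + \sum_{i\in[n]} (q-\textbf{a}[i])\cdot \sk[i] \pmod m .
\]
Correctness then reduces to two facts. First, $S$ viewed as an integer lies in $[0,m)$, so that $\texttt{ADec}$ recovers $S$ itself rather than $S \bmod m$. Second, $S \equiv b-\sum_{i} \textbf{a}[i]\cdot\sk[i] \pmod q$.

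For the overflow step, I use that $\sk[i]\in\{0,1\}$, $0\le b<q$ and $0\le \textbf{a}[i]<q$. These give $0 < q-\textbf{a}[i]\le q$, so every term $(q-\textbf{a}[i])\cdot\sk[i]$ lies in $[0,q]$. Summing,
\[
0 \le S \le q + nq < m ,
\]
which replaces the bound $q+nq^2$ in the theorem. Hence no wraparound occurs modulo $m$, and $\texttt{ADec}(\addkey,x)=S$ exactly as integers.

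For the congruence step, I reduce modulo $q$ exactly as in the chain of equalities in the proof of \Cref{thm:lwe-compress-correct}. This gives $\mu^{**} = S \bmod q = b-\sum_i \textbf{a}[i]\sk[i] \bmod q = \mu^*$. The rounding $\lfloor\mu^{**}/\Delta\rceil$ is then identical to line 2 of \textsc{LWEDecrypt}, so $\mu''$ equals the output of \textsc{LWEDecrypt}.

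The only delicate point is the boundary case $\textbf{a}[i]=0$, where the multiplier is $q$ rather than something below $q$. This is why the bound is $q+nq$ and not $q+n(q-1)$. Since the hypothesis is the strict inequality $m>q+nq$, this case is already covered. Apart from that, the argument is a direct specialization of the theorem.
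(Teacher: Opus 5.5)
Your proposal is correct and follows essentially the same route as the paper. The paper justifies the corollary only by remarking that the proof of \Cref{thm:lwe-compress-correct} goes through with $0\le \sk[i]\le 1$ in place of $\sk[i]<q$; this shrinks the no-overflow bound from $q+nq^2$ to $q+nq$, and the reduction modulo $q$ and the rounding step are unchanged, exactly as in your argument.
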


\textbf{Security.}
In Gentry's original construction of a bounded depth encryption scheme, he proposed the idea of using a chain of semantically secure cryptosystems, such that each cryptosystem encrypts the secret key of the next~\cite{gentryFullyHomomorphicEncryption2009}. Gentry proved that if the secret key of each cryptosystem is sampled independently, the composed scheme is also semantically secure.

Let $\mathcal{E}'$ denote the cryptosystem which is the chaining of $\mathcal{E}_{\text{LWE}}$ and $\mathcal{E}_{A}$. The encryption and decryption procedure of $\mathcal{E}'$ is shown in \Cref{alg:lwe-encrypt-decrypt} and \Cref{alg:lwe-compress}, respectively. The secret key of $\mathcal{E}'$ is the combination of the secret keys of $\mathcal{E}_{\text{LWE}}$ and $\mathcal{E}_{A}$. The same holds for the public key as well. Moreover, we also release encryptions of the bits of the secret key of $\mathcal{E}_{\text{LWE}}$ under the secret key of $\mathcal{E}_{A}$. 

\begin{proposition}[Security]
    If $\mathcal{E}_{\text{LWE}}$ and $\mathcal{E}_A$ are semantically secure, then $\mathcal{E}'$ is also semantically secure.
\end{proposition}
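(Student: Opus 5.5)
We prove semantic security of $\mathcal{E}'$ with a two-step hybrid argument. The adversary's view in the IND-CPA game for $\mathcal{E}'$ consists of the public key of $\mathcal{E}_A$ (and of $\mathcal{E}_{\text{LWE}}$, if it is public-key), the compression key $\ck$ with $\ck[i]=\texttt{AEnc}(\addkey,\sk[i])$, and a challenge ciphertext $\ct=\textsc{LWEEncrypt}(\sk,\mu_b)$. Compression is a public operation on $(\ck,\ct)$, so it gives the adversary nothing beyond this view. It therefore suffices to show that $(\ck,\ct_0)$ and $(\ck,\ct_1)$ are computationally indistinguishable, where $\ct_b$ encrypts $\mu_b$.

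\textbf{Hybrids.} We define four games:
\begin{itemize}
\item $H_0$: the real game with challenge bit $0$.
\item $H_1$: identical to $H_0$, except that $\ck$ is replaced by encryptions of zeros, $\ck[i]=\texttt{AEnc}(\addkey,0)$.
\item $H_2$: identical to $H_1$, except that the challenge encrypts $\mu_1$.
\item $H_3$: the real game with bit $1$, i.e.\ $H_2$ with the true $\ck$ restored.
\end{itemize}

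\textbf{Step $H_0\approx H_1$.} Suppose an adversary distinguishes $H_0$ from $H_1$. We build a reducion $\mathcal{B}$ against the (multi-message) semantic security of $\mathcal{E}_A$. $\mathcal{B}$ samples $\sk$ itself and submits the message vectors $(\sk[0],\dots,\sk[n-1])$ and $(0,\dots,0)$. It receives either encryptions of $\sk$ or of zeros and uses them as $\ck$. It produces the LWE challenge itself, since it knows $\sk$. Finally it forwards the distinguisher's output. The multi-message version of IND-CPA follows from single-message IND-CPA by a standard hybrid over the $n$ coordinates, losing a factor $n$. The key point is that $\addkey$ is sampled independently of $\sk$, so the reduction never needs $\addkey$. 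The step $H_2\approx H_3$ is symmetric.

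\textbf{Step $H_1\approx H_2$.} In these games $\ck$ carries no information about $\sk$, so a distinguisher yields an IND-CPA adversary against $\mathcal{E}_{\text{LWE}}$. The reduction $\mathcal{B}'$ samples $(\pk_A,\addkey)$ itself, creates the zero-encryptions for $\ck$, and embeds its own challenge ciphertext. If $\mathcal{E}_{\text{LWE}}$ is symmetric-key, any additional encryptions the adversary requests are answered through the LWE encryption oracle. Summing the three advantages gives a negligible total advantage.

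\textbf{Main obstacle.} The delicate point is step $H_0\approx H_1$: it requires that $\sk$ and $\addkey$ be independent, with no circularity. In particular, $\mathcal{E}_A$'s key must not be encrypted under $\mathcal{E}_{\text{LWE}}$. This is exactly Gentry's acyclic-chain condition, and I would state it explicitly as a hypothesis. A further subtlety is that $\sk[i]\in\ZZ_q$ must lie in the message space $\ZZ_m$, which holds since $m>q$ by the assumption $m>q+nq^2$ of \Cref{thm:lwe-compress-correct}. The text refers to encryptions of "bits" of $\sk$. Whether whole coefficients or bits are encrypted, the same hybrid applies, merely over more coordinates.
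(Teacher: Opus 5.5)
Your proof is correct and is essentially the paper's argument. The paper just cites Gentry's theorem that a chain of semantically secure schemes with independently sampled keys, each encrypting the next one's secret key, is semantically secure; your hybrid (replace $\ck$ by encryptions of zero, switch the LWE challenge, restore $\ck$) is exactly that theorem's proof specialized to the two-link chain from $\mathcal{E}_{\text{LWE}}$ to $\mathcal{E}_A$, with the key-independence (no-cycle) condition made explicit where the paper leaves it to the citation.
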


\subsection{Using Smaller Compression Keys}
\label{sec:smaller-compression-key}
In practice, the plaintext space of the additive encryption system could be much larger than is required for the correctness of the compression technique to hold, i.e., $m \gg q + n q^2$.
For example, the plaintext space of Paillier for 128-bit security is 3072 bits, which is much larger than $q + n q^2$ for any common choice of LWE parameters.
We can use this gap to pack multiple digits of the LWE secret key within one additive ciphertext.
Instead of encrypting each bit of the LWE key separately, we encrypt the first $t$ digits of the secret key together into one packed additive ciphertext as $\texttt{pck}_{0-t} = \texttt{AEnc}(\addkey, \sum_{i\in[t]}\sk[i] \cdot \delta^{i})$ for a large enough $\delta$.
Specifically, $\delta$ should be such that $\delta > q + n q^2$ (or $\delta > q + n q$ in the case of binary keys).
On the server side, the server unpacks the secret key by computing $\ck[i] = \delta^{t-1-i} \otimes \texttt{pck}_{0-t}$ for $i\in[t]$.
Compression proceeds as before, with the only difference being that the encrypted phase, calculated by the server in the additive ciphertext, is scaled by a factor of $\delta^{t-1}$.
\appsection{sec:lwe-compress-packed-keys} details the procedures for generating the packed key, unpacking it, and the corresponding modified LWE decryption function.
We use the same function for compressing the ciphertext.


\subsection{Batched Compression}
\label{sec:batched-compression}

To achieve better compression, multiple LWE ciphertexts (encrypted using the same secret key) can be compressed within the same additive ciphertext, which we denote as \textit{batched compression}.
Each LWE ciphertext takes up $\log_2 (q+nq^2)$ bits of the total bitwidth of the plaintext space.
So, if $m$ is the modulus of the plaintext space, then $\floor{\log_2 m / \log_2 (q+nq^2)}$ LWE ciphertexts can be compressed into one ciphertext from the additive cryptosystem.

\Cref{alg:batched-compress} illustrates how to compress $\ell$ LWE ciphertexts within one additive ciphertext. The corresponding decryption procedure is also shown.
Using \textsc{LWECompress} as a subprocedure allows for better parallelization when compressing many LWE ciphertexts.

\begin{algorithm}[H]
	 \caption{Batch Compression of LWE ciphertexts by the server and the modified decryption procedure, performed by the client. The compression key $\ck$ is such that $\sk[i]={\texttt{ADec}(\addkey, \ck[i])}$ and $cts=\{c_j\}_{j\in [\ell]}$ such that $c_j = (\boldsymbol{\textbf{a}_j}, b_j)\in \ZZ_{q}^{n}\times\ZZ$ and $\gamma = q + n q^2$.}
	 \label{alg:batched-compress}
	 \begin{algorithmic}[1]
    \Procedure{BatchedLWECompress$_{q, \gamma}$}{$\ck, cts=\{c_j\}_{j\in [\ell]}$}
        \For{$j\in{[\ell]}$}
            \State $x_j \leftarrow \textsc{LWECompress}_q(\ck,c_j)$
            \State $x \leftarrow x \oplus \gamma^{j} x_j $
        \EndFor
        \Return $x$
    \EndProcedure
    \vspace{3mm}
    \Procedure{ModifiedBatchedLWEDecrypt$_{q,p, \gamma}$}{$\addkey, x$}
	 	\State $\mu^{**} = \texttt{ADec}(\addkey, x)$
            \For{$j \in [\ell]$}
                \State $\mu_j^{**} =\floor{\mu^{**}/\gamma^j} \mod \gamma$
	       \State $\mu''_j = \lfloor \frac{\mu_j^{**} \mod q}{\Delta} \rceil$\Comment{$\Delta=\round{q/p}$}
            \EndFor
        \Return $\{\mu_j''\in \ZZ_{p}\}_{j\in[\ell]}$
    \EndProcedure
	 \end{algorithmic}
\end{algorithm}

\begin{theorem}[Correctness]
    Let $\ct=\{c_j\}_{j\in [\ell]}$ be a vector of $\ell$ LWE ciphertexts.
    For $\gamma\geq q + n q^2$, if  $m > \gamma^{\ell}$, then \textsc{BatchedLWECompress}$_{q, \gamma}$ produces a compressed ciphertext which, if decrypted using the corresponding modified decryption, decrypts to the vector of $\ell$ plaintexts. More formally, if
    \begin{align*}
        x \leftarrow \textsc{BatchedLWECompress}(\ck, \ct, k) \\
        \{\mu'_j\}_{j\in\ell} \leftarrow \textsc{ModifiedBatchedLWEDecrypt}(
        \addkey, x)
    \end{align*}
    then $\mu'_j = \text{LWEDecrypt}(\sk,c_j)$.
\end{theorem}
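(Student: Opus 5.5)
The plan is to reduce the batched statement to \Cref{thm:lwe-compress-correct} applied to each ciphertext separately, together with a base-$\gamma$ digit argument showing that the $\ell$ decrypted phases do not interfere inside the additive plaintext space.

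First, for each $j\in[\ell]$, let $x_j=\textsc{LWECompress}_q(\ck,c_j)$. Define the integer (not reduced modulo anything)
\[
v_j = b_j + \sum_{i\in[n]} (q-\textbf{a}_j[i])\cdot \sk[i].
\]
The inequality in the proof of \Cref{thm:lwe-compress-correct} shows that $0\le v_j\le q+nq^2$. By additive homomorphism, $\texttt{ADec}(\addkey,x_j)\equiv v_j \pmod m$.

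I will treat the boundary case explicitly. The natural requirement is $0\le v_j<\gamma$. If $\gamma=q+nq^2$ exactly, then $v_j\le\gamma$ only allows the value $\gamma$ itself. Since $b_j<q$, the bound is actually strict ($v_j\le q-1+nq^2<\gamma$), so $v_j\in[0,\gamma)$ holds. This is the step I expect to need the most care: the hypothesis $\gamma\ge q+nq^2$ is only barely enough, and the strictness has to come from $b_j\le q-1$.

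Second, the batched ciphertext $x=\bigoplus_j \gamma^j\otimes x_j$ decrypts, again by homomorphism, to
\[
V=\sum_{j\in[\ell]}\gamma^j v_j \bmod m.
\]
Because each $v_j\le\gamma-1$, we get $0\le \sum_j\gamma^j v_j\le(\gamma-1)\sum_{j<\ell}\gamma^j=\gamma^\ell-1<m$. Hence there is no wraparound, and $\mu^{**}=V$ exactly as an integer.

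The tuple $(v_0,\dots,v_{\ell-1})$ is then precisely the base-$\gamma$ digit expansion of $\mu^{**}$. This gives $\floor{\mu^{**}/\gamma^j}\bmod\gamma=v_j$, since the lower digits contribute less than $\gamma^j$ and the higher ones are multiples of $\gamma^{j+1}$.

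Finally, $v_j\bmod q=b_j-\sum_i\textbf{a}_j[i]\sk[i]\bmod q$, which is exactly the phase $\mu^*$ computed by \textsc{LWEDecrypt} on $c_j$. The last line of the decryption loop applies the same rounding by $\Delta$, so $\mu''_j=\textsc{LWEDecrypt}(\sk,c_j)$ for every $j$.
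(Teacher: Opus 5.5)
Your proposal is correct and follows essentially the same route as the paper: bound each per-ciphertext phase below $\gamma$ via the single-ciphertext correctness argument, use $\sum_{j\in[\ell]}\gamma^j(\gamma-1)=\gamma^\ell-1<m$ to rule out wraparound, read off the base-$\gamma$ digits, and finish with the unchanged mod-$q$ reduction and rounding. Your explicit use of $b_j\le q-1$ to get the strict bound $v_j<q+nq^2\le\gamma$ is a welcome tightening. The paper simply asserts $\mu^{**}_j<\gamma$, although the earlier proof only records the non-strict bound $\le q+nq^2$.
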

\begin{proof}
    By the proof of \Cref{thm:lwe-compress-correct} we know that if $\mu^{**}_j=\texttt{ADec}_{\texttt{s}}(x_j)$, then $0\leq \mu^{**}_j < \gamma = q + nq^2$. Hence, we have
    \begin{align*}
        \mu^{**} = \sum_{j\in[\ell]} \gamma^j \mu^{**}_j \leq \sum_{j\in[\ell]} \gamma^j (\gamma-1) = \gamma^{\ell} - 1 < \gamma^{\ell} < m .
    \end{align*}
    Hence, the plaintext corresponding to $x$, i.e., $\mu^{**}$, does not overflow in the plaintext space of the additive ciphertext.
    If $\mu^*_j$ is equivalent to $\mu^*$ in the \textsc{LWEEncrypt} procedure, then for some value $t$,
    \begin{align*}
        \mu''_j =\floor{\mu^{**}/\gamma^j} \mod \gamma = (\mu^*_j + \gamma \cdot t ) \mod \gamma = \mu^*_j
    \end{align*}
    and the subsequent steps are similar, which proves the theorem.
\end{proof}

\subsubsection{Faster Batched Compression with Expanded Key}
Compression makes use of expensive operations in the additive scheme.
The plaintext multiplication in \Cref{alg:line-multiply} of \Cref{alg:lwe-compress} is the most expensive operation.
In additive schemes such as Paillier and ElGamal, this is equivalent to a modular exponentiation in a large group.

In the batched setting, we can reduce the overhead by precomputing and reusing multiples of the bits of the secret key. If we decompose $(q-\textbf{a}[i])$ as $(b_{t-1}\cdots b_1 b_0)_2 = (q-\textbf{a}[i]) \mod q$ we compute the plaintext multiplication as follows
\begin{align}
    &(q-\textbf{a}[i]) \otimes \ck[i] \\
    &= 2^{t-1} b_{t-1} \ck[i] + \cdots + 2b_{1} \ck[i] + b_{0} \ck[i]
\end{align}
and we can precompute an \textit{extended compression key}, $\eck$, such that $\eck[i][j] = 2^j \ck[i]$ for $j \in [t]$, which can be reused for all LWE ciphertexts we want to compress.
\Cref{alg:batched-compress-precompute} shows the procedure extending the compression key and compressing LWE ciphertexts using the extended compression key.

\begin{algorithm}[t]
	 \caption{Batch compression of LWE ciphertexts using precomputed powers. The compression key $\ck$ is such that $\sk[i]={\texttt{ADec}(\addkey, \ck[i])}$ and $cts=\{c_j\}_{j\in [\ell]}$ such that $c_j = (\boldsymbol{\textbf{a}_j}, b_j)\in \ZZ_{q}^{n}\times\ZZ$.}
	 \label{alg:batched-compress-precompute}
	 \begin{algorithmic}[1]
    \Procedure{ExpandCompressionKey$_{q}$}{$\ck$}
        \State $\eck[0] = \ck$
        \For {$i \in [t-1]$} \Comment{$t = \ceil{\log_2 q}$}
            \For {$j \in [n]$}
                \State $\eck[i+1][j] = \eck[i][j] \oplus \eck[i][j]$
            \EndFor
        \EndFor
        \Return $\eck$
    \EndProcedure
    \vspace{3mm}
    \Procedure{FastLWECompress$_q$}{$\eck, \ct=(\textbf{a},b)$}
    \State $x=b$
    \For{$i \in [n]$} 
        \State $(b_{t-1}\cdots b_1 b_0)_2 \leftarrow (q-\textbf{a}[i]) \mod q$
        \Comment{$t = \ceil{\log_2 q}$}
        \For {$j \in [t]$}
            \If {$b_j = 1$}
                \State $x \leftarrow x \oplus \eck[j][i]$
            \EndIf
        \EndFor
    \EndFor
    \Return $x$ \Comment{$\mu^*=\texttt{ADec}(\addkey, x)$}
    \EndProcedure
    \vspace{3mm}
    \Procedure{FastBatchedLWECompress$_{q,\gamma}$}{$\ck, cts=\{c_j\}_{j\in [\ell]}$}
        \State $\eck \leftarrow \textsc{ExpandCompressionKey}_{q}(\ck)$
        \State $\gamma = q + n q^2$
        \For{$j\in{[\ell]}$}
            \State $x_j \leftarrow \textsc{FastLWECompress}_q(\eck,c_j)$
            \State $x \leftarrow x \oplus \gamma^{j} x_j $
        \EndFor
        \Return $x$
    \EndProcedure
	 \end{algorithmic}
\end{algorithm}

\begin{corollary}
    Let $\ct=\{c_j\}_{j\in [\ell]}$ be a vector of $\ell$ LWE ciphertexts.
    For $\gamma\geq q + n q^2$, if  $m > \gamma^{\ell}$, if
    \begin{align*}
        x \leftarrow \textsc{FastBatchedLWECompress}(\eck, \ct, k) \\
        \{\mu'_j\}_{j\in\ell} \leftarrow \textsc{ModifiedBatchedLWEDecrypt}(
        \addkey, x)
    \end{align*}
    then $\mu'_j = \text{LWEDecrypt}(\sk,c_j)$.
\end{corollary}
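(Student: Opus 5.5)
The plan is to reduce the corollary to the batched correctness theorem. I will show that for every LWE ciphertext $c_j$, the additive ciphertext $x_j$ returned by \textsc{FastLWECompress}$_q(\eck, c_j)$ decrypts under $\addkey$ to the same plaintext integer as the output of \textsc{LWECompress}$_q(\ck, c_j)$. The plaintexts need not be identical as ciphertexts; only the underlying plaintext in $\ZZ_m$ has to agree. Once this holds, \textsc{FastBatchedLWECompress} and \textsc{BatchedLWECompress} combine the $x_j$ in the same way, via $x \leftarrow x \oplus \gamma^j x_j$, with the same $\gamma = q+nq^2$. The two therefore produce ciphertexts with identical plaintext $\mu^{**}$. \textsc{ModifiedBatchedLWEDecrypt} depends only on that plaintext, so the batched correctness theorem gives $\mu'_j = \textsc{LWEDecrypt}(\sk, c_j)$ directly under the hypotheses $\gamma \geq q+nq^2$ and $m>\gamma^\ell$.

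First, by induction on $i$, I will show that $\eck[i][j]$ decrypts to $2^i \sk[j] \bmod m$. The base case is $\eck[0]=\ck$, and each step uses $\texttt{ADec}(a\oplus a) = 2\,\texttt{ADec}(a)$ by additive homomorphism.

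Second, write $(q-\textbf{a}[i]) \bmod q = \sum_{k\in[t]} b_k 2^k$ with $t=\lceil \log_2 q\rceil$. By homomorphism, the plaintext of $x_j$ after the inner loops is
\[ b + \sum_{i\in[n]} \sum_{k\in[t]} b_k 2^k \sk[i] = b + \sum_{i\in[n]} \big((q-\textbf{a}[i]) \bmod q\big)\,\sk[i] \pmod m. \]

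The one subtle point is that \textsc{LWECompress} multiplies by $q-\textbf{a}[i]$, whereas the fast version multiplies by $(q-\textbf{a}[i]) \bmod q$. These differ only when $\textbf{a}[i]=0$, where the factor is $q$ in one case and $0$ in the other. So the two plaintexts may differ by a multiple of $q\,\sk[i]$. I expect this to be the main obstacle, though only a minor one. I handle it by not requiring the plaintexts to be identical, but instead rerunning the argument of the batched theorem on the fast plaintexts $\mu^{**}_j$. Each such $\mu^{**}_j$ is a nonnegative integer bounded by $q + n(q-1)(q-1) < \gamma$, so there is no overflow. Moreover, $\mu^{**}_j \equiv b - \sum_i \textbf{a}[i]\sk[i] \pmod q$. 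The digit extraction $\lfloor \mu^{**}/\gamma^j\rfloor \bmod \gamma$ then recovers $\mu^{**}_j$ exactly, since $\sum_j \gamma^j \mu^{**}_j \le \gamma^\ell - 1 < m$. Reducing mod $q$ gives the LWE phase, and the final rounding matches \textsc{LWEDecrypt}. This completes the proof without depending on exact equality with the slow compression.
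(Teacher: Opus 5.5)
Your proposal is correct and follows the route the paper intends. The paper gives no separate proof; the corollary is meant to follow directly from the batched correctness theorem, because $\eck[j][i]$ encrypts $2^j\sk[i]$ and the binary decomposition reproduces the scalar multiplication. Your observation that $(q-\textbf{a}[i]) \bmod q$ differs from $q-\textbf{a}[i]$ when $\textbf{a}[i]=0$ catches a small imprecision in the equality the paper asserts. Your fix handles it correctly: you rerun the no-overflow bound $\mu^{**}_j<\gamma$ and the mod-$q$ congruence on the fast plaintexts instead of claiming exact equality with the slow compression.
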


\subsubsection{Rescaling for Compression}

In some instances, it is possible to rescale the elements in the ciphertext to a smaller modulus without altering the underlying message.
This technique, also called modulus switching, is commonly used in the literature to simplify the decryption procedure or control noise growth~\cite{brakerskiFullyHomomorphicEncryption2012}.
However, rescaling is only possible if the noise of the underlying LWE ciphertext is less than a given bound. 
In \appsection{sec:modulus-switching-theorem}, we prove how rescaling is possible for LWE ciphertexts with binary keys, if the noise is less than a certain bound, i.e., less than $\Delta/4$.
Rescaling to a smaller modulus accelerates our compression technique since the scalar multiplication in the additive encryption scheme is done with a smaller scalar.

\subsubsection{Better compression with a smaller scale}
\label{sec:overlapping-noise}

The number of LWE ciphertexts that fit within each additive ciphertext is determined by the scale, i.e., $\gamma=q+nq^2$.
Using a smaller scale would allow us to pack more LWE ciphertexts within each additive ciphertext.
There are two instances where we can use a smaller scale.
First, when the LWE secret key is binary.
In that case, we can use $\gamma=q+nq$ as the scale.
This follows from the fact that in the case of binary keys, $0<\mu_j^{**} \leq \gamma = q+nq$.

The second instance where we can reduce the scale is by allowing $\mu_{j}^{**}$ and $\mu_{j+1}^{**}$ to overlap in the additive scheme.
Intuitively, this is possible because the high-order bits of $\mu_{j}^{**}$ are removed when it is taken modulo $q$ as part of the modified decryption.
The lower order bits of $\mu_{j+1}^{**}$ are also rounded during the modified decryption so it is possible to add additional error, as long as it does not interfere with the message.
Specifically, if $|e|<\Delta/4$ (instead of the usual condition where $|e|<\Delta/2$ for correct decryption), we can reduce the scale to $\gamma=q^2$ and $\gamma=q$ in the case of non-binary and binary keys, respectively.
Due to space restrictions, we provide proof of the correctness of this technique using a smaller scale under these conditions in the full version of the paper.

%% file: zippir.tex
\section{Our PIR Construction: \protocol{}}

In this section, we present \protocol{}, a high-throughput single-server PIR protocol optimized for resource-bound clients.
\protocol{} uses the compression technique from the previous section but pushes all expensive operations to the offline phase.
We show how we benefit from our proposed compression technique whilst maintaining an efficient online phase.

\subsection{Leveraging the offline phase}
As we saw in the previous section, our compression technique is very effective in reducing the size of LWE ciphertexts.
However, this comes at a significant cost due to the expensive operations in the additive scheme.
In the case of Paillier, multiplications are done via expensive exponentiations.

To avoid performing expensive operations in the online phase, we note an observation by Beck~\cite{beckRandomizedDecryptionRD2015}, which states that a Paillier encryption of a message $\mu\in\ZZ_m$ can be represented as 
\begin{align}
    \enc(\mu) = \enc(r) + (\mu-r)    
\end{align}
where $r$ is a uniformly random value over the plaintext space of the encryption scheme.
Moreover, instead of sampling $r$ and encrypting it, we can directly sample the ciphertext $\enc(r)$ from $\ZZ_m^2$ using a PRNG seed.
While valid Paillier ciphertexts are elements of $\ZZ_{m^2}^*$, the likelihood of sampling an invalid ciphertext is
negligible in the security parameter (roughly $2/\sqrt{m}$).

Using this observation, we can evaluate any linear function (such as the compression function) over Paillier ciphertexts in two steps.
First, we compute the function over the random component $\enc(r)$, which does not depend on the message $\mu$.
Then we compute the function over $(\mu-r)$, which we call the \textit{offset}, and add this to the result.
Using this approach, all expensive operations over the additive ciphertexts are performed in an offline phase, before the value of $\mu$ is known.


\subsection{\protocol{} Description}
In this section, we describe our construction, \protocol{}.
We also show how, through simple modifications, \protocol{} requires no storage overhead for the client and, under computational assumptions, the offline processing can be done silently, except for an initial public key and seed.

\protocol{} builds on SimplePIR~\cite{henzingerOneServerPrice2023} but eliminates the need for the client to store a large database-dependent hint by compressing the hint using the algorithms described in \Cref{sec:main}.
However, since the compression is a linear operation, we can leverage the online phase, as described in the previous section.
While we do not use the precise procedures described in \Cref{sec:main}, the operations in \protocol{} perform the same function in principle, albeit in an offline/online manner.

\Cref{alg:zippir} shows a simple description of \protocol{}, and we describe additional modifications and enhancements in the following sections.
At a high level, the protocol works as follows. In the offline phase:
\begin{itemize}\itemsep0mm
    \item At setup ($\textsc{Setup}$), the client generates the necessary Paillier public keys and samples the Paillier ciphertexts, which are the random components of the compression keys.
    \item In the offline hint generation ($\textsc{Hint}$), the server generates a compressed version of the Paillier hint given the client's Paillier public keys and the sampled ciphertexts. To avoid client-side storage, the server holds this hint until the client issues a query.
\end{itemize}

Once the client's query is known, the online phase occurs as follows:
\begin{itemize}\itemsep0mm
    \item The client's query consists of two components: the offset of the compression key and the LWE query (from SimplePIR).
    \item The server performs two matrix multiplications — one over the database and another over the hint. Both matrix multiplications are done over modular integers. The server sends the response and the associated hint to the client.
    \item The client extracts the answer from the server's response.
\end{itemize}

\newcommand{\clienthint}{\textbf{k}}
\newcommand{\clientstate}{\st}

\begin{algorithm}[]
\caption{
  Simple description of \protocol{}. $q$, $n$, and $p$ are the LWE ciphertext modulus, ciphertext dimension, and plaintext modulus, respectively.
  Database $\db\in\ZZ_p^{d_0\times d_1}$ where $N=d_0 d_1$. Also, assume $\textbf{A} \sample \ZZ_q^{d_0 \times n}$ and the server also computes $\hint = - \db^{T} \cdot \textbf{A} \in \ZZ_q^{d_1 \times n}$.
  $\textsc{PaillierMatMul}$ denotes a multiplication between a plaintext matrix and an encrypted vector.
}
	 \label{alg:zippir}
	 \begin{algorithmic}[1]
    \Procedure{Setup}{$1^\lambda$, $(d_0,d_1)$} \Comment{Client}
        \State Generate Paillier keys ($\paillierkey$, $\paillierpk=m$) with $\lambda$-bit security
        \State Sample $\ck_r \sample \ZZ^{n}_{m^2}$
        \State $\pt_r \leftarrow \textsc{PaillierDecrypt}(\paillierkey, \ck_r)$ \label{zippir:paillier-decrypt-random}
        \State \Return (($\paillierkey$, $\pt_r$), ($m$, $\ck_r$))
    \EndProcedure
    \vspace{2mm}
    \Procedure{Hint}{$\hint \in \ZZ_q^{d_1 \times n}$, ($m$, $\ck_{r})$} \Comment{Server}
        \State $\clienthint \leftarrow \textsc{PaillierMatMul}_{m}(\hint, \ck_r)$
        \Comment{$\clienthint\in\ZZ_{m^2}^{d_1}$} \label{zippir:clienthint-compute}
        \State \Return $\clienthint$
    \EndProcedure
    \vspace{2mm}
    \Procedure{Query}{$i_0\in [d_0]$, $\pt_r$} \Comment{Client}
        \State Sample LWE key $\lwekey\sample\{0,1\}^{n}$
        \State $\ck_{o} = (\lwekey - \pt_r)\mod m$  
        \label{zippir:compression-key-offset}
        \State $u_0 =$ selection vector for index $i_0$
        \State Sample $e\leftarrow \chi_e^{d_0}$
        \State $\qu_0 = \textbf{A}\cdot \lwekey + e + \Delta\cdot u_0 \mod q$ \Comment{$\Delta = \round{q/p}$}\label{zippir:lwe-encrypt}
        \State \Return $(\ck_o, \qu_0)$ \Comment{$\ck_o \in \ZZ_m^{n}, \qu_0\in\ZZ_q^{d_0}$}
    \EndProcedure
    \vspace{2mm}
    \Procedure{Response}{$\db, \hint, \qu=(\ck_{o}, \qu_0)$} \Comment{Server}
        \State $b = \db^{T} \cdot {\qu}_0 \mod q$ \label{zippir:online-lwe}
        \State $t = b + \hint \cdot \ck_o \mod m$ \label{zippir:online-compression-offset}
        \State \Return $t$
    \EndProcedure
    \vspace{2mm}
    \Procedure{Extract}{$\paillierkey$, $(\clienthint, t)$} \Comment{Client}
        \State $\mu = (t + \textsc{PaillierDecrypt}(\paillierkey, \clienthint)) \mod m$
        \State \Return $f = \round{(\mu \mod q) p/q} \mod p$
    \EndProcedure    
  \end{algorithmic}
\end{algorithm}



The following theorems hold for the correctness and security of \protocol{}.
The full proof of correctness and security are provided in \appsection{appendix:proof}.

\begin{restatable}{theorem}{protocolCorrectness}
(\protocol{} Correctness) For LWE parameters $(n,q,\chi_e, \chi_s)$ where $\chi_s$ is a discrete Gaussian with standard deviation $\sigma$, and $\chi_s$ is a binary distribution, and plaintext modulus $p$, and failure rate $\delta$ such that 
    \begin{align}
        q/p > 2p\sigma \sqrt{2d_0\ln (2/\delta)}
        \label{eq:correctness-condition}
    \end{align}
    for random $\mask\in\ZZ_q^{d_0\times n}$, for any database $\db\in\ZZ_{p}^{d_0\times d_1}$, $\hint=-\db^{T} \cdot \mask$, Paillier modulus $m$ such that $m > q+nq$, and any query $i_0 \in[d_0]$, if 
    \begin{align*}
        ((\paillierkey, \pt_r), (m, \ck_r))  \leftarrow  \textsc{Setup}(1^{\lambda}, (d_0, d_1)) \\
        \clienthint  \leftarrow  \textsc{Hint}(\hint, (m, \ck_r)) \\ 
        (\ck_{o}, \qu_0)  \leftarrow  \textsc{Query}(i_0, \pt_r) \\
        t  \leftarrow  \textsc{Response}(\db, \hint, (\ck_{o}, \qu_0)) \\
        f  \leftarrow  \textsc{Extract}(\paillierkey, (\clienthint, t))
    \end{align*}
    then $\PP[\db[i_0] = f] > 1 - \delta - 2d_0/\sqrt{m}$.
\end{restatable}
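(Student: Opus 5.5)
The argument has three parts. First, show that the Paillier-side computation in \textsc{Extract} recovers, over the integers, the phase of the SimplePIR answer. Second, invoke the LWE compression correctness of \Cref{thm:lwe-compress-correct} (with the binary-key bound $m>q+nq$ from the corollary) to reduce to SimplePIR decryption. Third, bound the SimplePIR noise with a Hoeffding tail. The two failure sources are the ones in the statement: invalid sampled Paillier ciphertexts, and a noise tail exceeding $\Delta/2$. We handle each separately and combine them with a union bound.

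\textbf{Step 1 (validity of sampled ciphertexts).} $\ck_r$ consists of $n$ uniform elements of $\ZZ_{m^2}$. Each lies outside $\ZZ_{m^2}^*$ with probability roughly $2/\sqrt{m}$. A union bound over the entries involved then gives a failure term of the form $2d_0/\sqrt{m}$. We take this as the bound in the statement: in the paper's accounting the number of sampled ciphertexts is bounded by $d_0$. Conditioned on all entries being valid, Paillier decryption is a homomorphism, and $\pt_r[i]=\textsc{PaillierDecrypt}(\paillierkey,\ck_r[i])$ is well defined.

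\textbf{Step 2 (algebra of the reconstruction).} By the additive homomorphism, $\textsc{PaillierDecrypt}(\clienthint)=\hint\cdot\pt_r \bmod m$. Hence
\[
\mu \equiv b+\hint\cdot(\lwekey-\pt_r)+\hint\cdot\pt_r \equiv b+\hint\cdot\lwekey \pmod m .
\]
Here every entry of $\hint$ and $b$ is taken as its representative in $[0,q)$, so the $\bmod\ q$ reduction of $\hint$ causes no issue. Because $\lwekey\in\{0,1\}^n$, for each coordinate the integer $b+\sum_i \hint[\cdot][i]\lwekey[i]$ lies in $[0,q+nq)\subset[0,m)$. So no wraparound occurs mod $m$, exactly as in the proof of \Cref{thm:lwe-compress-correct} and its binary-key corollary. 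Reducing mod $q$ then gives
\[
\mu \bmod q = \db^T\qu_0-\db^T\mask\lwekey = \db^T e+\Delta\,\db^T u_0 \pmod q ,
\]
which is the SimplePIR phase for column entry $\db[i_0]$ (with $\Delta\,\db[i_0]$ plus noise $\db^T e$).

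\textbf{Step 3 (noise bound).} This is the main obstacle: one must match the constants in \eqref{eq:correctness-condition} exactly. We need $|\langle \db_j, e\rangle|<\Delta/2 \approx q/(2p)$ for the relevant coordinate. The noise is a sum of $d_0$ independent subgaussian terms $\db[k]\,e[k]$ with $|\db[k]|\le p$ (using centered representatives, or the bound $p$), each with parameter $p\sigma$. A Chernoff--Hoeffding subgaussian tail gives
\[
\Pr\big[|\textstyle\sum|\ge T\big]\le 2\exp\!\big(-T^2/(2d_0p^2\sigma^2)\big).
\]
Setting $T=q/(2p)$ and using \eqref{eq:correctness-condition} makes this at most $\delta$. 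Finally, a union bound over the two bad events yields $\PP[\db[i_0]=f]>1-\delta-2d_0/\sqrt{m}$. Under the good event, rounding $(\mu\bmod q)p/q$ returns $\db[i_0]$.
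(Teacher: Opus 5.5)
Your proposal is correct and follows the paper's proof essentially step for step. Like the paper, you discard the event that some sampled entry of $\ck_r$ is not a unit, recover $\mu=b+\hint\cdot\lwekey$ exactly because $b+\hint\cdot\lwekey<q+nq<m$, reduce mod $q$ to the SimplePIR phase $\Delta\,\db[i_0]+\db^{T}e$, and bound the noise. There are two small differences. You write out the subgaussian tail that the paper simply cites from SimplePIR, and your constants reproduce the condition exactly. You also combine the two bad events by a union bound instead of the paper's product $(1-\delta)(1-2d_0/\sqrt{m})$.

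Your $n$-versus-$d_0$ hand-wave in Step~1 is shared by the paper, which also writes $(\ZZ^*_{m^2})^{d_0}$ although $\ck_r\in\ZZ_{m^2}^{n}$. Strictly, the validity term is therefore about $2n/\sqrt{m}$ unless one assumes $n\le d_0$.
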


\begin{restatable}{theorem}{protocolSecurity}
(\protocol{} Security)
    Assume we have secure LWE parameters $(n,q,\chi_e, \chi_s)$ where $\chi_s$ is a discrete Gaussian with standard deviation $\sigma$, and $\chi_s$ is a binary distribution.
    Also, assume we have secure Paillier parameters.
    Then for any PPT adversary $\adv$ and any $i,j\in[d_0]$, we have 

\begin{adjustbox}{scale=0.85}
\begin{minipage}{\columnwidth}    
    \begin{align*}
        \left| \PP\left[\adv(1^\lambda, (q_h, \qu)) = 1 \, \middle| \,
            \begin{array}{c}
                ((\_, \pt_r), q_h) \leftarrow \textsc{Setup}(1^\lambda, (d_0, d_1)) \\
                \qu \leftarrow \textsc{Query}(i, \pt_r)
            \end{array}
        \right] \right.\\
        \left. {} - \PP\left[\adv(1^\lambda, (q_h, \qu)) = 1 \, \middle| \,
            \begin{array}{c}
                ((\_, \pt_r), q_h) \leftarrow \textsc{Setup}(1^\lambda, (d_0, d_1))\\
                \qu \leftarrow \textsc{Query}(j, \pt_r)
            \end{array}
        \right] \right| < \epsilon
    \end{align*}
\end{minipage}    
\end{adjustbox}
    where $\epsilon$ is negligible in the security parameter.
\end{restatable}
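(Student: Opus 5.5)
The proof is a hybrid argument. The goal is to show that the adversary's view $(q_h,\qu)=((m,\ck_r),(\ck_o,\qu_0))$ for index $i$ is computationally close to a distribution that does not depend on $i$ at all. Running the same chain for $j$ and applying the triangle inequality then gives the bound $\epsilon$. The view leaks the index only through $\qu_0$, but $\qu_0$ is correlated with the other components through the shared LWE secret $\lwekey$: $\ck_o=\lwekey-\pt_r$ and $\pt_r=\textsc{PaillierDecrypt}(\paillierkey,\ck_r)$. So the first hybrids must cut the link between $(\ck_r,\ck_o)$ and $\lwekey$; only then can the LWE assumption be applied to $\qu_0$.

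\textbf{Hybrid 1 (statistical).} Replace each entry of $\ck_r\sample\ZZ_{m^2}^n$ by a fresh Paillier encryption $\texttt{AEnc}(r_k)$ of a uniform $r_k\sample\ZZ_m$, and set $\pt_r=(r_k)_k$. The key fact is that Paillier encryption $(r,\rho)\mapsto g^r\rho^m \bmod m^2$ is a bijection $\ZZ_m\times\ZZ_m^*\to\ZZ_{m^2}^*$. Hence a uniform element of $\ZZ_{m^2}^*$ is exactly a fresh encryption of a uniform plaintext with uniform randomness. The only discrepancy is the event that some sampled entry lies outside $\ZZ_{m^2}^*$, which happens with probability about $2/\sqrt m$ per entry. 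A union bound over the $n$ entries gives statistical distance at most $2n/\sqrt m$, which is negligible.

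\textbf{Hybrid 2 (rewrite, then Paillier IND-CPA).} First re-sample the same joint distribution in a different order. Choose $\ck_o\sample\ZZ_m^n$ uniformly and set $\ck_r=\texttt{AEnc}(\lwekey)\oplus\texttt{AEnc}(-\ck_o)$, that is, a rerandomized encryption of $r=\lwekey-\ck_o$. Since $r$ is uniform and independent of $\lwekey$ exactly when $\ck_o$ is, this view is identically distributed to Hybrid 1. Next, invoke semantic security of Paillier through a reduction that holds $\paillierpk$ and must embed encryptions of $\lwekey[k]$ versus $0$ into $\ck_r$. The reduction samples $\lwekey,\ck_o$ and the LWE part itself. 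For each $k$ it submits the challenge pair $(\lwekey[k],0)$, receives $c_k$, and outputs $\ck_r[k]=c_k\oplus\texttt{AEnc}(-\ck_o[k])$. A standard $n$-step hybrid over the coordinates, or multi-message IND-CPA, replaces every $\texttt{AEnc}(\lwekey[k])$ by $\texttt{AEnc}(0)$. After this hybrid, $(m,\ck_r,\ck_o)$ is a function of uniform randomness and the Paillier key only, and is independent of $\lwekey$.

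\textbf{Hybrid 3 (LWE).} Now $\lwekey$ appears only in $\qu_0=\mask\lwekey+e+\Delta u_i$, with $\mask$ public and uniform. By the decisional LWE assumption for $(n,q,\chi_e,\chi_s)$ with binary secret, $(\mask,\mask\lwekey+e)$ is indistinguishable from $(\mask,\textbf{u})$ with $\textbf{u}$ uniform. The reduction simulates the independent Paillier part itself. Once $\mask\lwekey+e$ is replaced by a uniform vector, $\qu_0=\textbf{u}+\Delta u_i$ is uniform and independent of $i$. The whole view then has a distribution that is identical for $i$ and $j$.

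I expect the main obstacle to be the bookkeeping around Hybrid 1 and the start of Hybrid 2. One point is making precise that decrypting an arbitrary element of $\ZZ_{m^2}$ is well defined only on $\ZZ_{m^2}^*$, and charging the bad event to the statistical term. The other is verifying carefully that the ``sample $\ck_o$ first'' rewriting is an exact distributional identity, since IND-CPA cannot be applied while $\pt_r$ is still defined through decryption. The LWE step is routine once the Paillier components are decoupled from $\lwekey$.
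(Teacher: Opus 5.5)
Your proposal is correct and follows the paper's proof essentially step for step. The hybrids are: the statistical swap of uniformly sampled $\ck_r$ for fresh encryptions of uniform plaintexts; the exact re-sampling in which $\ck_o$ is uniform and $\ck_r$ encrypts $\lwekey-\ck_o$ (the paper's simulator linking $\mathcal{H}_1$ and $\mathcal{H}_2$); Paillier IND-CPA to remove $\lwekey$ from $\ck_r$; and finally the SimplePIR LWE step on $\qu_0$. If anything your version is more careful than the paper's: your union bound correctly runs over the $n$ entries of $\ck_r$ rather than $d_0$, and you spell out the multi-ciphertext IND-CPA reduction explicitly.
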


\subsection{Additional Optimizations \& Tradeoffs.}
\label{sec:extensions}

The description of \protocol{} from \Cref{alg:zippir} can be modified to provide bandwidth savings and reduce computational and storage burden for the client.

\textbf{Faster matrix multiplication using RNS.}
The second matrix multiplication in the online phase $(\hint\cdot\ck_o)$ is performed over large integers, which is slow.
To be able to speed up the matrix multiplication, we represent the two operands of the multiplication in the Residue Number System (RNS) format.
The hint ($\hint$) can be converted to RNS in the offline phase so we only require the offset vector ($\ck_o$) to be converted during the online phase.
Using this technique, we can perform the second matrix multiplication over native machine operations.

\textbf{Batched compression for smaller responses.}
A more efficient instantiation of \protocol{} uses our batched compression technique instead of single compression.
This is possible because the scaling used in batched compression is also a linear operation, and hence can be split into offline and online operations.
Batched compression not only reduces the size of the response, but also results in smaller matrix-vector multiplication in \Cref{zippir:online-compression-offset}.
We can also use a small scale as mentioned in \Cref{sec:overlapping-noise}.

\textbf{Lowering offline communication costs.}
To achieve a (nearly) silent offline phase, we generate $\ck_r$ using a PRG seed.
The client and server only need to agree on the initial seed for the pseudorandom generator.
Moreover, this seed can be used to generate a large number of client hints (polynomial in the security parameter) for all future queries.
The client can also store the seed, alongside its Paillier private key, to enable queries in the future, all of which require constant-size storage.
One consequence is that the client delays decryption of $\ck_r$ (\Cref{zippir:paillier-decrypt-random}) to the \textsc{Query} step.

\textbf{Smaller responses w/ Paillier addition.}
To further reduce the response size, the online response ($t$), can be combined with the offline hint (\clienthint) instead of being sent separately, at the cost of a Paillier addition.
We can then send only one Paillier ciphertext instead of one ciphertext and one plaintext, reducing the response size by 30\%.

\textbf{Supporting multiple queries.}
The current description of \protocol{} supports one query from the client.
To extend this to multiple queries, the client can generate multiple $\ck_r$ values, for as many queries as it wishes to issue, and send them to the server.
Using a PRG, as mentioned in the previous paragraph, the offline bandwidth stays constant, even for multiple queries.
The server offline time and server per-client storage scales linearly with the number of queries that the server wants to support for that client, without additional offline communication.

\textbf{Reusability of the offline phase.}
Similar to SimplePIR, the computation of the global hint $\hint$ is reused for all queries from all clients.
The client-specific hint computation (\Cref{zippir:clienthint-compute}) is not reusable across clients because it depends on the client's public key and must be computed separately for each prospective client.

\textbf{\protocol{} with client storage.}
Although we presented \protocol{} as a client-efficient PIR protocol, it can also be modified to use client storage, similar to the work of Patel et al.~\cite{patelPrivateStatefulInformation2018}.
To achieve this, the client simply retrieves and stores the hint ($\clienthint$) in the offline phase for as many queries as it will make.
A hint is consumed for each query made by the client, similar to the client state used in PSIR~\cite{patelPrivateStatefulInformation2018}.

\subsection{Concrete Costs of \protocol{}}
\label{sec:concrete-costs}

To achieve the best concrete performance (demonstrated in \Cref{sec:pir-evaluation}), we include the first three techniques from \Cref{sec:extensions} in our final construction. 
We report the concrete costs of this augmented protocol in this section.

The concrete number of operations in each routine, along with the party that performs those operations, is listed below. Let $\gamma = \ceil{(\log_2 m - \log_2 n) / \log_2 q}$ and $d_1' = \floor{d_1 / \gamma}$, then
\begin{itemize}\itemsep0mm
    \item $\textsc{Setup}$ (Client): Paillier key generation
    \item $\textsc{Hint}$ (Server) : $d_1' n$ Paillier plaintext multiplications and additions (\Cref{zippir:clienthint-compute})
    \item $\textsc{Query}$ (Client): $d_0$ LWE encryptions, (\Cref{zippir:lwe-encrypt}), $n$ random samples in $\ZZ_m$, and $n$ Paillier decryptions (\Cref{zippir:paillier-decrypt-random})
    \item $\textsc{Response}$ (Server): $N=d_0d_1$ multiplications and additions in $\ZZ_q$ (\Cref{zippir:online-lwe}) and $d_1' n$ multiplications and additions in $\ZZ_{m}$ (\Cref{zippir:online-compression-offset}).
    \item $\textsc{Extract}$ (Client): $d_1'$ Paillier decryptions
\end{itemize}

Similarly, the concrete communication costs (in bits) of the protocol are listed below. 
\begin{itemize}\itemsep0mm
    \item Client to Server (Offline, one-time): $2\log_2 m + \lambda$
    \item Client to Server (Online, per-query) : $n\log_2 m + d_0 \log_2 q$
    \item Server to Client (Online, per-query) : $3 d_1' \log_2 m$
\end{itemize}

The server-side storage per client is $2d'_1\log_2 m$ bits for each query. The client-side storage is constant, only the Paillier private key and the seed used to generate ciphertexts, which is $\log_2 m + \lambda$ bits in total.

%% file: evaluation.tex
\section{Evaluation}
\label{sec:evaluation}

We present our evaluation in two sections.
First, evaluating the compression of LWE ciphertexts using our proposed method compared to existing approaches in the literature.
Second, we compare our PIR construction, \protocol{}, with PIR protocols in the literature.

\subsection{Evaluating Ciphertext Compression}

We implemented our compression technique as a library in C++ using GMP.
We use Paillier as the additive encryption scheme, which is extended to Damgard-Jurik when we require a larger plaintext space.
We use a 3072-bit modulus for Paillier, composed of two 1536-bit primes, which is the recommended modulus size for 128-bit security~\cite{barkerRecommendationKeyManagement2020}.
We experiment with LWE and RLWE parameters satisfying 128-bit security but our methods can be applied to other LWE and RLWE parameters without any change.

We also parallelized our implementation to minimize the latency of the compression.
Specifically, we parallelize over the LWE dimension $n$ or the number of LWE ciphertexts that are compressed, depending on whichever is larger.
Using this dynamic approach, we use existing cores on our machines even when compressing few ciphertexts.

We experiment under two scenarios: 1) compressing a single LWE ciphertext or RLWE coefficient and 2) compressing many LWE ciphertexts or multiple RLWE coefficients.
The former is useful in applications with small outputs such as private inference, whereas the latter is used for applications with large outputs such as image processing.

\subsubsection{Single Compression Evaluation}

\begin{table*}
    \centering
    \caption{Evaluation of the ciphertext compression technique for a single LWE ciphertext. Three sample parameter sets are chosen for LWE-based ciphertexts.
    The first three columns are common parameter sets used in the Concrete library~\cite{zamaConcreteTFHECompiler2022}.
    The last configuration is the STD128 configuration for CGGI in OpenFHE~\cite{albadawiOpenFHEOpenSourceFully2022}.} \resizebox{\textwidth}{!}{
    \begin{tabular}{c|c|c|c|c|c|c|c|c}
    \toprule
    \multirow{2}{*}{Parameters}
    & \multicolumn{4}{c|}{LWE $(n,\log_2 q)$} 
    & \multicolumn{4}{c}{RLWE $(N,\log_2 q)$} 
    \\
        & (630, 64)
        & (742, 64)
        & (870, 64)
        & (1305, 11)
        & (1024,27)
        & (2048,54)
        & (4096,36)
        & (8192,43)
         \\
    \midrule                     
        Compression Time   & 9.7 ms & 11.0 ms & 12.9 ms & 16.6 ms & 7.2 ms & 23.8 ms & 33.8 ms & 83.3 ms \\
        Compressed Ciphertext    & 768 B   & 768 B  & 768 B & 768 B   & 768 B  & 768 B & 768 B & 768 B \\
        Uncompressed Ciphertext  & 5.05 KB & 5.94 KB & 6.97 KB & 1.80 KB & 3.46 KB & 13.83 KB & 18.44 KB & 44.04 KB \\
        Size Reduction            & \textbf{84.78 \%} & \textbf{87.08} \% & \textbf{88.98\%} & \textbf{57.23\%} & \textbf{77.80\%} & \textbf{94.45\%} & \textbf{95.83\%} & \textbf{98.26\%} \\
    \bottomrule
    \end{tabular}
    }
    \label{tab:evaluation-lwe-compress}
\end{table*}

\Cref{tab:evaluation-lwe-compress} summarizes the results for compressing a single LWE ciphertext. We choose LWE parameters adopted in existing libraries implementing variants of LWE encryption~\cite{zamaConcreteTFHECompiler2022,albadawiOpenFHEOpenSourceFully2022}.
The results show that we consistently provide high compression rates. Notably, for $\log_2 q = 64$, our compression rates are over 84\%.

Similarly, for compression of a single RLWE coefficient, we use common parameters for RLWE-based schemes such as BFV~\cite{brakerskiFullyHomomorphicEncryption2012,fan2012somewhat} and BGV~\cite{brakerskiLeveledFullyHomomorphic2012}, which are used in libraries such as SEAL~\cite{MicrosoftSEALRelease2023}, Lattigo~\cite{lattigo} and OpenFHE~\cite{albadawiOpenFHEOpenSourceFully2022}.
Recall that compression is compatible with modulus switching so we choose the parameters corresponding to the lowest level in a BFV/BGV parameter set. We achieve over 85\% compression and up to 98\%.





\subsubsection{Measuring Compression Key Sizes}

Using the technique described in \Cref{sec:smaller-compression-key}, we can pack the compression key and have the server unpack the key.
The unpacking can be done offline, as soon as the server receives the packed key, to reduce latency during compression.
The compression procedure is identical after the key has been unpacked, so we do not report the runtime of compression again.
Instead, we measure the size of the compression key, with and without packing, and report the time required to unpack the key.
We also distinguish two cases, non-binary and binary keys.
In the case of binary keys, we use $\delta=q+nq$ so more bits of the secret key can fit within the same ciphertext.
\Cref{tab:packed-compression-key} shows the size of the packed compression keys in the two cases. Note that even the size of the unpacked key is much smaller than commonly used cryptographic keys such as relinearization keys, automorphism keys, and bootstrapping keys, which could be as large as 100 MB.

\begin{table}[b]
    \centering
    \caption{
        Size of packed compression keys and unpacking time.
        We distinguish the case of binary and non-binary keys since binary keys can be packed more than non-binary keys.
        The Paillier modulus is 3072 bits in all cases.
    }
    \resizebox{\columnwidth}{!}{
    \begin{tabular}{c|c|c|c|c}
    \toprule
         $(n,\log_2 q)$ & (630,64) & (742,64) & (870,64) & (1305,11) \\
    \midrule
        Unpacked Key & 240 KB & 284 KB & 334 KB & 501 KB \\
    \midrule
        Packed Non-binary Key & 22 KB & 26 KB & 30 KB & 11 KB \\
        Unpacking Time & 14 ms & 25 ms & 74 ms & 15 ms \\
    \midrule
        Packed Binary Key & 12 KB & 14 KB & 16 KB & 7 KB \\
        Unpacking Time & 13 ms & 12 ms & 13 ms & 15 ms \\
    \bottomrule
    \end{tabular}
    }
    \label{tab:packed-compression-key}
\end{table}

\subsection{Batched Compression Evaluation}

\begin{figure*}[t]
    \centering
    \begin{tikzpicture}
        \begin{axis}[
            height=0.2\columnwidth,
            width=2*\columnwidth, 
            hide axis,
            xmin=0,
            xmax=1,
            ymin=0,
            ymax=1,
            legend columns=-1, 
            legend style={/tikz/every even column/.append style={column sep=0.5cm}},
            legend to name=named, 
        ]
    
            \addlegendimage{color=blue, mark size=1pt, mark=*}
            \addlegendentry{Compressed}
    
            \addlegendimage{color=black, mark size=1pt, mark=*}
            \addlegendentry{Compressed (Binary Key)}

            \addlegendimage{color=orange, mark size=1pt, mark=*}
            \addlegendentry{Rescaled \& Compressed (Binary Key)}
    
            \addlegendimage{color=teal, mark size=1pt, mark=*}
            \addlegendentry{Chen~\cite{chenEfficientHomomorphicConversion2021}}
        \end{axis}
        \end{tikzpicture}
    
\pgfplotslegendfromname{named} 

\begin{subfigure}[b]{\columnwidth}
    \centering
    \begin{tikzpicture}[
        declare function={
            logb(\x,\y) = ln(\x)/ln(\y);
            s=1; 
            log2m=3072; 
            n=630; 
            log2q=64; 
            newlog2q=20; 
        },
    ]
    \begin{axis}[
        name=plot1,
        at={(0,0)},
        xlabel = {\footnotesize Number of Compressed LWE Ciphertexts},
        ylabel = {\footnotesize Aggregate Size (KB)},
        domain=0:1000,
        samples=100,
        ymin = 0.4,
        ymax = 400,
        ymode = log,
        xmin = -50,
        xmax = 1090,
        height = 0.45\columnwidth,
        width = \columnwidth,
        legend style={at={(0.2,0.6)},anchor=east} 
    ]

        \addplot [color=red, mark size=0.75pt] {x * (n+1) * log2q / 8192 };
    
        \addplot [color=blue, mark size=0.75pt, mark=*] { (s+1)*log2m * ceil(x / floor(s * log2m / (1 + ceil(logb(n,2)) + 2*log2q))) / 8192 };

        \addplot [color=black, mark size=0.75pt, mark=*] { (s+1)*log2m * ceil(x / floor(s * log2m / (1 + ceil(logb(n,2)) + log2q))) / 8192 };

        \addplot [color=orange, mark size=0.75pt, mark=*] { (s+1)*log2m * ceil(x / floor(s * log2m / (1 + ceil(logb(n,2)) + newlog2q))) / 8192 };

        \addplot [color=teal, mark size=0.75pt, mark=*] table [
            x=slots,
            y expr=\thisrow{compressed_size_B}/1024,
            col sep=comma
        ] {data/chen.csv};
        
    \end{axis}    
    \end{tikzpicture}
    \caption{Compressed Size}
    \label{fig:batched-lwe-compression-n=630-communication}
\end{subfigure}
~
\begin{subfigure}[b]{\columnwidth}
    \begin{tikzpicture}[]
        \begin{axis}[
            name=plot1,
            at={(0,0)},
            xlabel = {\footnotesize Number of Compressed LWE Ciphertexts},
            ylabel = {\footnotesize Compression Time (s)},
            height = 0.45 \columnwidth,
            width = \columnwidth,
            ymode=log,
            legend pos=south east,
            error bars/y dir=both,
            error bars/y explicit
        ]

            \addplot [color=blue, mark size=1pt, mark=*] table [
                x=num_cts,
                y expr=\thisrow{compress_time_us_mean}/1000000,
                y error expr=\thisrow{compress_time_us_std}/1000000,
                col sep=comma
            ] {data/table2_nonbinkey.csv};

            \addplot [color=black, mark size=1pt, mark=*] table [
                x=num_cts,
                y expr=\thisrow{compress_time_us_mean}/1000000,
                y error expr=\thisrow{compress_time_us_std}/1000000,
                col sep=comma
            ] {data/table2_binkey.csv};

            \addplot [color=orange, mark size=1pt, mark=*] table [
                x=num_cts,
                y expr=\thisrow{compress_time_us_mean}/1000000,
                y error expr=\thisrow{compress_time_us_std}/1000000,
                col sep=comma
            ] {data/table2_binkey_switched.csv};

            \addplot [color=teal, mark size=1pt, mark=*] table [
                x=slots,
                y expr=\thisrow{total_time_ms}/1000,
                col sep=comma
            ] {data/chen.csv};
        \end{axis}    
    \end{tikzpicture}
    \caption{Time for Compression}
    \label{fig:batched-lwe-compression-n=630-computation}
\end{subfigure}
    \caption{
        Compressed size and compression time required for compressing LWE ciphertexts with $(n,q)=(630, 2^{64})$ using batched compression.
        The red line denotes the baseline size of uncompressed LWE ciphertexts.
    }
    \label{fig:compressing-batched-lwe}
\end{figure*}
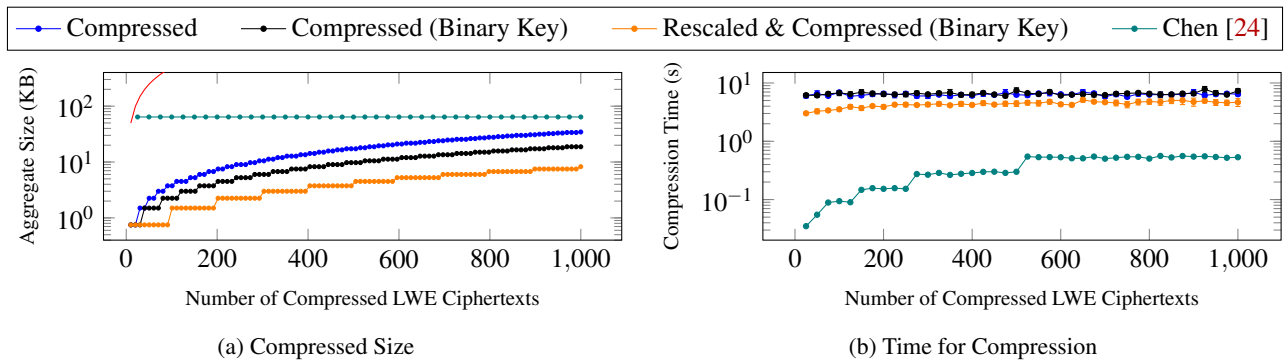

Next, we evaluate the batched compression of LWE ciphertexts.
In both cases, we use the batched compression algorithm to compress $\ell$ ciphertexts into additive Paillier ciphertexts.
Note that for batched compression, we can not use packed compression keys.

We distinguish the case of binary keys from non-binary keys in the experiment, denoted by blue and black lines, respectively.
As mentioned in \Cref{sec:batched-compression}, we set the scale to $\gamma=q + n q^2$ and $\gamma=q + n q$ in the case of non-binary and binary keys, respectively.
We also visualize a case where we rescale to a smaller modulus before compression.

The alternative approach to packing many LWE ciphertexts is RLWE packing~\cite{chenEfficientHomomorphicConversion2021,chillottiTFHEFastFully2020}.
Our reference point for RLWE packing is the work of Chen et al.~\cite{chenEfficientHomomorphicConversion2021} which is state-of-the-art in compression and has better runtime than related work.
This work maps LWE ciphertexts in $\ZZ_q^n\times\ZZ$ to an RLWE ciphertext in $\ZZ_q[X]/(X^n+1)$.

\Cref{fig:batched-lwe-compression-n=630-communication} shows the size after compression as a function of the number of compressed LWE ciphertexts.
\Cref{fig:batched-lwe-compression-n=630-computation} also shows the runtime of batched compression.
The blue and black plots correspond to non-binary and binary keys, respectively.
The orange plot also shows batched compression but over ciphertexts that are rescaled to $r=2^{20}$.
In the case of non-binary keys, we only need one Paillier ciphertext for up to 14 LWE ciphertexts and for binary keys, it is about 26.
Overall, compressing more LWE ciphertexts offers more compression compared to compressing only one LWE ciphertext.
This is because more of the plaintext space of Paillier is being utilized as more ciphertexts are compressed.
We also observe that rescaling improves the runtime and allows more compression, as is expected.
In comparison to the work of Chen et al., our compression protocol offers more than one order of magnitude better compression compared to RLWE packing.
However, our compression approach is slower than RLWE packing, particularly due to the use of expensive modular exponentiations.
In summary, in cases where there are fewer items to pack, our compression techniques offers much better compression while in cases with many LWE ciphertexts to pack, the work of Chen et al.\ is the more suitable solution.

\subsection{PIR Evaluation}
\label{sec:pir-evaluation}

In our PIR evaluation, we compare with two classes of protocols.
First, we compare with other protocols where the server stores a state for the client.
We show how the state in \protocol{} differs from that of related work and how we achieve higher throughput than similar works in this model.
We also evaluate \protocol{} as a protocol with client-side storage.

\textbf{Implementation \& Experimental Considerations.}
We implement \protocol{} in C++ using AVX512 and the libhcs library, which supports large integer arithmetic for Paillier operations.
For the LWE parameter, we use $(n,q)=(1400, 2^{32})$ and a discrete Gaussian distribution with a standard deviation of 6.4, which gives over 128-bit security as shown by the lattice-estimator~\cite{albrechtConcreteHardnessLearning2015}.
We use the same parameters for the Paillier encryption that we used in the previous section.
Offline computation, which involves Paillier operations, is parallelized but online computation is single-threaded for a fair comparison with related work.
We perform the second matrix multiplication $(\hint\cdot\ck_o)$ over an RNS basis consisting of 240 prime moduli of 27-bit length.
This allows us to perform the matrix multiplication within the machine word size, without the need to perform modular reduction until the end of the computation, significantly enhancing performance.

For our experiments, we use the reference implementation of each protocol when it is provided.
All experiments are performed 5 times and the average statistics are reported.
We conduct all experiments on an Intel(R) Xeon(R) Platinum 8276 CPU running Ubuntu 24.04.
Our code is open-source and available~\footnote{\url{https://github.com/RasoulAM/ZipPIR}}.

\subsubsection{Evaluation of PIR without Client Storage}
In this section, we compare with state-of-the-art client-efficient single-server PIR protocols.
This consists of protocols that require per-client storage on the server, i.e., SealPIR~\cite{angelPIRCompressedQueries2018a}, FastPIR~\cite{ahmadAddraMetadataprivateVoice2021}, OnionPIR~\cite{mugheesOnionPIRResponseEfficient2021}, Spiral~\cite{menonSPIRALFastHighRate2022}.
Protocols with silent preprocessing (that also support large payloads), such as YPIR (using SimplePIR)~\cite{menonYPIRHighThroughputSingleServer2024}, and HintlessPIR~\cite{liHintlessSingleServerPrivate2023} are also relevant and are included in this category.
In each protocol, we measure the per-client storage required by the server, the communication costs, and the runtime.
We report the online server throughput, a metric proposed by Henzinger et al.~\cite{henzingerOneServerPrice2023}, which shows how many database elements are processed per unit of time.
\Cref{tab:compare-pir-with-state} summarizes the results of this comparison.

\begin{table*}[t]
    \caption{
        Comparison of communication and computation costs with PIR protocols with per-client storage.
        All protocols retrieve a payload of at least 16 KB.
        * In all protocols except \protocol{}, the offline communication is equal to the per-client storage. In \protocol{}, the offline communication is 400 B.
    }
    \label{tab:compare-pir-with-state}
    \centering
\resizebox{\textwidth}{!}{
    \begin{tabular}{c|rcccccccccc} 
        \toprule
        DB & Metric & SealPIR~\cite{angelPIRCompressedQueries2018a} & FastPIR~\cite{ahmadAddraMetadataprivateVoice2021} & OnionPIR~\cite{mugheesOnionPIRResponseEfficient2021} & Spiral~\cite{menonSPIRALFastHighRate2022} & HintlessPIR~\cite{liHintlessSingleServerPrivate2023} & YPIR~\cite{menonYPIRHighThroughputSingleServer2024} & \protocol{} &  \\ 
        \midrule
        \midrule
        \multirow{6}{*}{\begin{tabular}{c}256 MB \end{tabular}}
         & Per-client Storage & 1215 KB & 670 KB & 5538 KB & 12432 KB & - & - & 60 KB \\\cmidrule{2-9} 
         & Query & 90 KB & 832 KB & 256 KB & 14 KB & 424 KB & 518 KB & 617 KB \\
         & Response & 181 KB & 64 KB & 128 KB & 20 KB & 964 KB & 120 KB & 91 KB \\
         & Total Comm. & 271 KB & 896 KB & 384 KB & 34 KB & 1388 KB & 638 KB & 708 KB \\\cmidrule{2-9} 
         & Server Time & 2455 ms & 1851 ms & 3378 ms & 1633 ms & 662 ms & 388 ms & 171 ms \\
         & Throughput & 104 MB/s & 138 MB/s & 75 MB/s & 156 MB/s & 383 MB/s & 659 MB/s & \cellcolor{MyGreen} 1497 MB/s \\
       \midrule
        \multirow{6}{*}{\begin{tabular}{c}512 MB \end{tabular}}
         & Per-client Storage & 1215 KB & 670 KB & 5538 KB & 12656 KB & - & - & 85 KB \\\cmidrule{2-9} 
         & Query & 90 KB & 1664 KB & 256 KB & 14 KB & 448 KB & 574 KB & 655 KB \\
         & Response & 181 KB & 64 KB & 128 KB & 20 KB & 964 KB & 120 KB & 128 KB \\
         & Total Comm. & 271 KB & 1728 KB & 384 KB & 34 KB & 1452 KB & 694 KB & 783 KB \\\cmidrule{2-9} 
         & Server Time & 4564 ms & 2808 ms & 5865 ms & 3031 ms & 674 ms & 425 ms & 227 ms \\
         & Throughput & 112 MB/s & 182 MB/s & 87 MB/s & 168 MB/s & 759 MB/s & 1204 MB/s & \cellcolor{MyGreen} 2255 MB/s \\
       \midrule
       \multirow{6}{*}{\begin{tabular}{c}1 GB\end{tabular}}
         & Per-client Storage & 1350 KB & 670 KB & 5538 KB & 12656 KB & - & - & 120 KB \\\cmidrule{2-9} 
         & Query & 90 KB & 3328 KB & 256 KB & 14 KB & 488 KB & 686 KB & 708 KB \\
         & Response & 181 KB & 64 KB & 128 KB & 20 KB & 1748 KB & 120 KB & 181 KB \\
         & Total Comm. & 271 KB & 3392 KB & 384 KB & 34 KB & 2236 KB & 806 KB & 889 KB \\\cmidrule{2-9} 
         & Server Time & 9071 ms & 4698 ms & 9662 ms & 3080 ms & 971 ms & 515 ms & 339 ms \\
         & Throughput & 112 MB/s & 217 MB/s & 105 MB/s & 332 MB/s & 1053 MB/s & 1988 MB/s & \cellcolor{MyGreen} 3020 MB/s \\
       \midrule
        \multirow{6}{*}{\begin{tabular}{c}2 GB\end{tabular}}
        & Per-client Storage & 1350 KB & 670 KB & 5538 KB & 14224 KB & - & - & 170 KB \\\cmidrule{2-9} 
         & Query & 90 KB & 6592 KB & 256 KB & 14 KB & 616 KB & 910 KB & 782 KB \\
         & Response & 181 KB & 64 KB & 128 KB & 20 KB & 1748 KB & 120 KB & 255 KB \\
         & Total Comm. & 271 KB & 6656 KB & 384 KB & 34 KB & 2364 KB & 1030 KB & 1037 KB \\\cmidrule{2-9} 
         & Server Time & 17692 ms & 8388 ms & 19569 ms & 4435 ms & 1232 ms & 704 ms & 580 ms \\
         & Throughput & 115 MB/s & 244 MB/s & 104 MB/s & 461 MB/s & 1661 MB/s & 2909 MB/s & \cellcolor{MyGreen} 3531 MB/s \\
        \bottomrule
    \end{tabular}
}
\end{table*}

\protocol{} has a much higher throughput than all works in this category.
This can be attributed to the simple online phase, which only consists of two matrix multiplications.
In contrast, all other protocols involve operations over RLWE ciphertexts.
Compared to the fastest alternative, YPIR, \protocol{} is at least 20\% faster in all cases, and over 2x faster for some database sizes.

The setup of \protocol{} only requires a one-time communication cost of 400 B, which consists of the clients Paillier public key and cryptographic seed.
In all instances of \protocol{}, 527 KB of the total query size consists of the compression key offsets ($\ck_o$), and the remainder is the LWE query vector ($\qu_o$).
The online communication cost of \protocol{} is similar to that of HintlessPIR and YPIR.
SealPIR, OnionPIR, and Spiral have smaller per-query communication costs, since they are highly optimized for this metric at the cost of throughput and server-side storage.

\subsubsection{Evaluation of PIR with Client-side Storage}

As described in \Cref{sec:extensions}, we can extend \protocol{} to leverage client-side storage as well.
\protocol{} with client storage requires less communication in the online phase.
Given that \protocol{} is a protocol with linear online time, we compare it with other protocols that also fall in this category.
Existing protocols with sublinear online time impose client-side costs that are inconsistent with our model.
We compare with the work of Patel et al.~\cite{patelPrivateStatefulInformation2018}, which uses a client-side state.
The authors instantiate PSIR using SealPIR, PaillierPIR, and XPIR in their paper.
They do not provide the implementation, so we compare with the numbers presented in their paper~\cite[Table 2 \& 3]{patelPrivateStatefulInformation2018}.
The result of this comparison is summarized as follows:

\begin{table}[t]
    \caption{
        Comparison of communication and computation costs with PSIR variants~\cite{PrivateStatefulInformation} over a database with $n$ items of size 288~B.
        In the case of \protocol{}, the stated client state size supports one query for that client.
    }
    \label{tab:comparison-statefulPIR}
    \centering
\resizebox{\columnwidth}{!}{
    \begin{tabular}{c|rcccc}
        \toprule
        \textbf{$n$} & \textbf{Metric} & {\small\textbf{SealPSIR}} & {\small\textbf{PaillierPSIR}} & {\small\textbf{XPSIR}} & \textbf{ZipPIR} \\
        \midrule
        \midrule
        \multirow{5}{*}{$2^{16}$}
            & Client Storage & 129 KB     & 129 KB     & 129 KB     & 6 KB \\\cmidrule{2-6}
            & Query & 74 KB      & 12 KB      & 305 KB     & 591 KB \\
            & Response & 256 KB     & 18 KB      & 262 KB     & 3 KB \\
            & Online Total & 330 KB     & 30 KB      & 567 KB     & 594 KB \\\cmidrule{2-6}
            & Online Server & 70 ms      & 760 ms     &  20 ms      & \cellcolor{MyGreen} 16 ms \\
        \midrule
        \multirow{5}{*}{$2^{18}$}
            & Client Storage & 258 KB     & 258 KB     & 258 KB     & 12 KB \\\cmidrule{2-6}
            & Query & 84 KB      & 24 KB      & 413 KB     & 655 KB \\
            & Response & 256 KB     & 18 KB      & 262 KB     & 6 KB \\
            & Online Total & 340 KB     & 42 KB      & 675 KB     & 661 KB \\\cmidrule{2-6}
            & Online Server & 210 ms     & 1020 ms    & 120 ms     & \cellcolor{MyGreen} 40 ms \\
        \midrule
        \multirow{5}{*}{$2^{20}$}
            & Client Storage & 524 KB     & 524 KB     & 524 KB     & 24 KB \\\cmidrule{2-6}
            & Query & 104 KB     & 48 KB      & 597 KB     & 783 KB \\
            & Response & 256 KB     & 18 KB      & 262 KB     & 12 KB \\
            & Online Total & 360 KB     & 66 KB      & 895 KB     & 795 KB \\\cmidrule{2-6}
            & Online Server & 710 ms     & 1750 ms    & 520 ms     & \cellcolor{MyGreen} 120 ms \\
        \bottomrule
    \end{tabular}
}
\end{table}

\textbf{Online Costs.}
In almost all instances, \protocol{} is faster than all instantiations of PSIR.
This can be attributed to the fact that PSIR performs PIR as a subroutine, which necessitates the use of public key operations, even though the number of such operations is sub-linear in the database size.
In contrast, \protocol{} only performs plaintext operations in the online phase.

With regards to network costs, \protocol{} requires less online communication compared to XPSIR, which is the fastest variant of PSIR.
The communication cost is comparable to that of SealPSIR and strictly worse than PaillierPSIR.
However, these two protocols are much slower, as mentioned before, which shows a communication-computation tradeoff.

\textbf{Offline Costs.}
In PSIR, the database is streamed to the client, and the state, which is sublinear in the database size, is stored.
In \protocol{} with client-side storage, only the hint is sent to the client, which is much smaller than the database and the PSIR hint, but only supports one query.



%% file: discussion.tex
\section{Discussion}


\textbf{Why not only use Paillier?}
We construct \protocol{} as a combination of LWE-based PIR and Paillier.
One might suggest using only Paillier for the full construction.
Corrigan-Gibbs and Kogan~\cite[Appendix E.2]{10.1007/978-3-030-45721-1_3} also suggested a solution that uses additive homomorphic encryption.
However, such a protocol requires a large number of Paillier plaintext multiplications, linear in the database size.
While this expensive step is performed in the offline phase, it is required to be done once for each query, which is a high computational cost, even in the offline phase.
In contrast, the offline phase of \protocol{} only requires a sublinear number of Paillier operations, which is much more practical.

\textbf{ZipPIR+DoublePIR.}
We can also construct a protocol, similar to \protocol{}, that builds on top of DoublePIR instead of SimplePIR.
In this variant, the size of the hint would be reduced and the matrix multiplication between the hint and the compression key offset (\Cref{zippir:online-compression-offset}) would be faster.
This would, however, limit the size of the payload to only one byte, similar to DoublePIR and other protocols built on it, such as YPIR~\cite{menonYPIRHighThroughputSingleServer2024}.
We leave a comparison of protocols focused on small payloads~\cite{10.1145/3658644.3690328, menonYPIRHighThroughputSingleServer2024, henzingerOneServerPrice2023} for future work.

\textbf{\protocol{} using a Packed Compression Key.}
\protocol{} can also be modified to use the packed compression key from \Cref{sec:lwe-compress-packed-keys}.
Given that the bulk of the query cost consists only of the compression key, this can reduce the query size at the cost of a larger response. 

\textbf{Advantages for database updates.}
\protocol{} delegates all offline storage and computation to the server (except constant storage for keys and the PRNG seed).
The benefit of this model is that database updates require no interaction with the client.
This is compatible with our assumption that the client is resource-constrained.
It is also efficient from a bandwidth standpoint since the server does not need to communicate with all available clients upon each database update, which is not scalable if many clients query the server.
The disadvantage is that client-specific hints must be updated when the database changes, but this can be mitigated with additional server resources, without any client involvement.

%% file: applications.tex




\section{Private SCT Auditing at Fixed Intervals}
Henzinger et al. proposed the use of PIR in context of certificate transparency, specifically for auditing of signed certificate timestamps (SCT)~\cite{henzingerOneServerPrice2023}.
The authors build on Google's recent solution of using an SCT auditor~\cite{optOutSCTDeBlasio}.

The initial solution of Henzinger et al. provided high throughput but came at the cost of a large hint stored by the clients.
The problem is that this hint must be updated upon each change in the database.
To combat this problem, the authors suggested that clients download hints periodically.

We propose a different approach to handle database updates, while maintaining throughput similar to that of SimplePIR.
Typically, the set of clients performing SCT checks are known to the auditor (e.g. Google), so we assume that clients can register to perform tests and submit their public keys.
Moreover, clients collect SCTs from TLS handshakes during browsing and perform an audit in a batch.
So in our proposed solution, we assume that clients audit a fixed-sized batch of SCTs at regular intervals.
Using this assumption, the server can perform precomputation in anticipation of the client's request.
The precomputation can be done on the latest version of the database, hence allowing to detect malicious behavior sooner.


%% file: related-work.tex
\section{Related Work}
\label{sec:related-work}

\subsection{Related work on compression}

Compression of homomorphic ciphertexts is achieved by performing scheme-switching, i.e., changing the scheme under which the message is encrypted.

\textbf{Compression from Precise Scheme-switching.}
Techniques such as modulus switching~\cite{cheonHomomorphicEncryptionArithmetic2017a} and dimension reduction~\cite{brakerskiEfficientFullyHomomorphic2011} change the parameters of LWE and RLWE ciphertexts, resulting in smaller ciphertexts.
LWE-to-RLWE packing~\cite{chenEfficientHomomorphicConversion2021} compresses a list of LWE ciphertexts into one RLWE ciphertext, which reduces the total size of the ciphertexts.
However, this method is less effective if less than $N$ ciphertexts are compressed, where $N$ is the degree of the RLWE ciphertext.
Coefficient extraction performs the inverse functionality of LWE-to-RLWE packing, i.e., extracting specific coefficients of the RLWE ciphertext and discarding the rest, which can reduce the overall size.

\textbf{Compression from Imprecise Scheme-switching.}
Imprecise scheme-switching changes the scheme of the ciphertext, but encrypts to a message that is slightly different than the original message.
Imprecise scheme-switching is sufficient if the final result is decrypted and no further computation is performed.

Hu~\cite{huImprovingEfficiencyHomomorphic2013} introduced the concept of \textit{secure converters} for converting between cryptographic schemes.
This is achieved by homomorphically evaluating (part of) the decryption circuit of the source scheme under the destination scheme. 
Within that framework, the author proposed homomorphically converting from LTV ciphertexts\cite{10.1145/2213977.2214086} to Paillier ciphertexts to reduce bandwidth usage from the server to the client.
The conversion, however, is not precise and the Paillier ciphertexts encrypt a noisy version of the initial plaintexts.
Using this approach, a 256x compression rate is achieved whilst communicating ciphertexts back to the client.
However, the LTV cryptographic scheme is not adopted as a practical homomorphic encryption scheme.

Brakerski et al.~\cite{brakerskiLeveragingLinearDecryption2019} showed how a high-rate compression, arbitrarily close to one, can be achieved over ciphertexts with the \textit{linear-decrypt-and-multiply} characteristic.
Cryptosystems with linear-decrypt-and-multiply can decrypt to any multiple of the message.
Based on the authors, among prevalent encryption schemes, only GSW falls into that category.
Assuming the goal is to encrypt $\{m_0,m_1,\cdots,m_{\ell-1}\}$, then the compression is done by homomorphically decrypting these messages to $\{m_0+e_0,\Delta m_1+e_1,\cdots,\Delta^{\ell-1} m_{\ell-1}+e_{\ell-1}\}$, where $e_i$'s are noise introduced from the homomorphic cryptosystem, similar to LWE.
By adding these messages together, the server obtains one large plaintext, encrypted under an additive ciphertext which it sends to the client.

Gentry et al.~\cite{gentryCompressibleFHEApplications2019} also proposed a method to compress many GSW ciphertexts into high-rate PVW~\cite{10.1007/978-3-540-85174-5_31} ciphertexts.
The ratio between the plaintext and ciphertext can be arbitrarily close to one in their construction.
However, this can only be achieved if the underlying aggregate plaintext is very large.
Specifically, for the ratio to be $1-\epsilon$, the aggregate plaintext must be proportional to $1/\epsilon^3$.
The authors described how to construct a PIR protocol from this technique, but their compression technique is not applicable to any other type of ciphertext.

%% file: related-work-pir.tex
\subsection{Related Work on PIR}
Computational PIR (CPIR) protocols follow one of two approaches:
1) the server (with or without the help of the client) computes a \textit{hint} that is given to the client 2) the client sends cryptographic keys to the server which are stored.
We describe each approach briefly, the advantages and disadvantages of each approach and list related work.

\subsubsection{PIR with client-storage}
One approach to PIR is to compute a database-dependent hint which is stored by the client before issuing queries.
This hint can speed up subsequent queries.
SimplePIR~\cite{henzingerOneServerPrice2023} and FrodoPIR~\cite{davidsonFrodoPIRSimpleScalable2023} propose a PIR protocol based on LWE with a client-independent hint.
The hint size is $O(\sqrt{N}n)$ for $N$ database rows and LWE dimension of $n$.
All clients use the same hint, which helps quickly respond to PIR queries and achieves high throughput (up to 10 GB/s).
However, the hint is a high upfront network cost (100 MB for a 1 GB database), requires large storage by the client, and must be recalculated and redistributed to the clients every time the database is updated.
DoublePIR extends SimplePIR such that the hint is smaller, but the overall throughput is less and it is limited to retrieval of one byte of information per query.

Another similar line of work proposes PIR with online time that is sublinear in the size of the database~\cite{corrigan-gibbsSingleServerPrivateInformation2022,piano}.
The client stores some information in the form of a hint, which is used to issue queries and must be updated after a certain number of queries have been made.
Despite the very high throughput of these protocols, the requirements are highly impractical for a resource-constrained client.
In some protocols~\cite{piano}, the server streams the entire database to the client in the offline phase, and the storage requirements for the client are sublinear in the size of the database.

\subsubsection{PIR with per-client server-side storage}
Another category of works assumes auxiliary information is sent before the start of the protocol, usually in the form of cryptographic keys.
The cost of sending these keys is amortized over many queries, but requires per-client storage on the server.
The per-query communication costs in such protocols are small, and if sufficient queries are made, the runtime and communication cost of setup are amortized.
Works that follow this model include SealPIR~\cite{angelPIRCompressedQueries2018a}, MulPIR~\cite{Ali2019CommunicationComputationTI}, OnionPIR~\cite{mugheesOnionPIRResponseEfficient2021}, Constant-weight PIR~\cite{mahdaviConstantweightPIRSingleround2022}, Pantheon~\cite{ahmadPantheonPrivateRetrieval2022}, FastPIR~\cite{ahmadAddraMetadataprivateVoice2021}, Spiral (and its variants)~\cite{menonSPIRALFastHighRate2022}, and SparsePIR~\cite{patelDonBeDense2023}.

HintlessPIR~\cite{liHintlessSingleServerPrivate2023} and YPIR~\cite{menonYPIRHighThroughputSingleServer2024} remove the need for client-specific storage and generate one large preprocessed hint, which depends on the database. This hint, which is stored by the server, is used to respond to all client queries.

%% file: appendix.tex
\appendix

\section{LWECompress Using Packed Compression Keys}
\label{sec:lwe-compress-packed-keys}

The necessary procedures for using a packed compression key is given in \Cref{alg:packed-key-compress}. The $\textsc{GeneratePackedKey}$ procedure generates the packed key from the LWE secret key.
The unpacking procedure computes the compression key from the packed compression key by scaling the packed key with different values.
By packing in this particular manner, the compression procedure can be done similar to before, without any changes.
The final change is made in the decryption.
The response is not necessarily in the lower order bits of the additive ciphertext ciphertext anymore, so a division is required before continuing with the rest of the LWE decryption procedure.

\begin{algorithm}[t]
	 \caption{Procedures for using a packed compression key,       including generating the packed key, unpacking it, and the corresponding modified decryption function.
        }
	 \label{alg:packed-key-compress}
	 \begin{algorithmic}[1]
    \Procedure{GeneratePackedKey}{$\addkey,\sk$}
        \State $t = \floor{\frac{0.5\log_2 m}{\log_2\delta}}$
        \For {$i\in[\ceil{n/t}]$}
            \State $r \leftarrow \delta^{-(t-1)} (\sum_{j\in[t]}\sk[it+j] \cdot \delta^{j}) \mod m$
            \State $ \pck_{i} \leftarrow \texttt{AEnc}(\addkey, r)$  
        \EndFor
        \State \Return $\pck$
    \EndProcedure
    \vspace{3mm}
    \Procedure{UnpackCompressionKey$_q$}{$\pck$}
        \For {$i\in[\ceil{n/t}]$}
            \For {$j\in[t]$}
                \State $\ck[it+j] \leftarrow \delta^{t-1-j} \otimes \pck[i]$
            \EndFor
        \EndFor
    \State \Return $\ck$
    \EndProcedure
    \vspace{3mm}
    \Procedure{ModifiedLWEDecryptPackedKey$_{q, p}$}{$\addkey,x$}
        \State $y \leftarrow \delta^{(t-1)}  \texttt{ADec}(\addkey, x) \mod m$ 
        \State $\mu^{**} = \floor{y / \delta^{(t-1)}} \mod q$
        \vspace{1mm}
        \State $ \mu'' = \lfloor \mu^{**}/\Delta\rceil$
        \Comment{$\Delta=\round{q/p}$}
        \vspace{2mm}
        \State \Return $\mu'' \in \ZZ_{p}$
   \EndProcedure
    \end{algorithmic}
\end{algorithm}

\section{Compressig RLWE Coefficients}
\label{sec:rlwe-compressing}

Similar to LWE, we can also construct an encryption scheme based on the Ring Learning with Errors (RLWE)~\cite{lyubashevskyIdealLatticesLearning2010} assumption, which we will denote as $\mathcal{E}_{\text{RLWE}}$. 

\begin{algorithm}[t]
     \caption{Encryption and Decryption of $\mathcal{E}_{RLWE}$}
     \label{alg:rlwe-encrypt-decrypt}
     \begin{algorithmic}[1]
        \Procedure{RLWEEncrypt}{$S(X), \mu(X)$}
        
        \State Sample $A(X)\xleftarrow{\$} R_q$ and $E(X) \leftarrow \chi$
        \vspace{2mm}
        \State $B(X) = A(X)\cdot S(X) + \Delta \cdot \mu(X) + E(X) \mod R_q$
        \State \Return $C=(A(X), B(X))$
        \EndProcedure

        \vspace{3mm}
        
        \Procedure{RLWEDecrypt}{$S(X), C$}
        \State $(A(X), B(X)) \leftarrow C$
        \vspace{1mm}
        \State $\mu^*(X) = (B(X) - A(X)\cdot S(X)) \mod R_q$
        \vspace{2mm}
        \State $\mu'(X)=\lfloor\mu^*(X)/\Delta\rceil$
        \vspace{2mm}
        
        \State \Return $\mu'(X)\in R_p$
        \EndProcedure     
     \end{algorithmic}
\end{algorithm}
\label{sec:rlwe-compression}
Cryptosystems such as BGV~\cite{brakerskiLeveledFullyHomomorphic2012}, BFV~\cite{brakerskiFullyHomomorphicEncryption2012,fan2012somewhat}, and CKKS~\cite{cheonHomomorphicEncryptionArithmetic2017a} have ciphertexts of a similar format.
RLWE ciphertexts are useful since they can encrypt a polynomial, i.e. a vector of numbers, instead of just one scalar. For RLWE encryption, we select a dimension $N$, ciphertext modulus $q$, plaintext modulus $p$, and $\Delta=\round{q/p}$. Define $R_q=\ZZ_q[X]/(X^N+1)$ and $R_p$ similarly. Moreover, define a discrete error distribution $\chi$ over $R_q$.
For key generation, sample $S(X)$ uniformly from $R_q$.

Similar to LWE, we can also compress fresh RLWE ciphertexts by sending the seed used to generate $A(X)$~\cite{aliCommunicationComputationTradeoffs2021a, MicrosoftSEALRelease2023, albadawiOpenFHEOpenSourceFully2022}. Using this technique, the size of a ciphertext can be reduced from $2N\log_2 q$ bits to $\lambda + N\log_2 q$.

RLWE ciphertexts also have a linear phase evaluation and hence, can benefit from our compression technique.
However, an RLWE ciphertext is only twice as large as the phase so the compression technique, applied naively, would not yield a significant improvement.
Our approach is beneficial if the user is only interested in some coefficients of the RLWE plaintext and not all of them.

The main observation is that each coefficient of $\mu'(X)$ in the \textsc{RLWEDecrypt} procedure can be calculated separately. Specifically, for $0\leq k \leq N-1$

\makeatletter
    \def\tagform@#1{\maketag@@@{\normalsize(#1)\@@italiccorr}}
\makeatother

{\tiny
\begin{align}
    \mu'&[k] = \lfloor\frac{\mu^*[k]}{\Delta}\rceil \\
    &= \left\lfloor \frac{B[k] - \sum_{i=0}^{k} A[k-i] \cdot S[i] + \sum_{i=k+1}^{N-1} A[N+k-i] \cdot S[i]}{\Delta} \right\rceil
    \label{eq:rlwe-extract-general}
\end{align}
}%

Note that the operations in the numerator are happening modulo $q$. The numerator of \Cref{eq:rlwe-extract-general} is a linear combination of the coefficients of the secret key, hence it can be computed given the encrypted coefficients of the secret key. The complete procedure to compress the coefficient of $X^k$ in an RLWE plaintext and the corresponding decryption function is shown in \Cref{alg:rlwe-compress-response}. Compression of RLWE coefficients is fully compatible with the compact compression keys of \Cref{sec:smaller-compression-key} and batched compression of \Cref{sec:batched-compression}.

\begin{algorithm}[H]
    \caption{
      Compressing Extracted RLWE Coefficient, performed by the server and the corresponding modified decryption process, for the client.
      The compression key is $\ck$ such that $\ck[i]={\texttt{AEnc}_{\texttt{s}}(S[i])}$ and $C\in R_q\times R_q$
    }
	 \label{alg:rlwe-compress-response}
	 \begin{algorithmic}[1]
    \Procedure{RLWECompressCoefficient}{$\ck, C, k$}
        \State $x=B[k]$
        \For{$i \in \{0,1,\cdots,k\}$}
            \State $x \leftarrow x \oplus \left( (q-A[k-i]) \otimes \ck[i]\right)$
        \EndFor
        \For{$i \in \{k+1,\cdots,N-1\}$}
            \State $x \leftarrow x \oplus \left(A[N+k-i] \otimes \ck[i]\right)$
        \EndFor
        
        \State \Return $x$ 
    \EndProcedure
    \vspace{3mm}
    \Procedure{ModifiedRLWEDecrypt$_{q,p}$}{$\addkey, x$}
        \State $\mu^{**}_k = \texttt{ADec}(\addkey, x) \mod q$
        \vspace{1mm}
        \State $\mu''_k= \lfloor \mu^{**}_k / \Delta \rceil$
        \Comment{$\Delta=\round{q/p}$}
        \vspace{2mm}
        
        \State \Return  $\mu''_k \in \ZZ_{p}$
    \EndProcedure
	 \end{algorithmic}
\end{algorithm}

\begin{theorem}[Correctness]
\label{thm:rlwe-compress-correct}
    If $m > q + N q^2$, \Cref{alg:rlwe-compress-response} produces a compressed ciphertext which decrypts to the coefficient of $X^k$ if decrypted using \textsc{ModifiedRLWEDecrypt}$_{q,p}$. More formally, 
    \begin{align*}\normalfont
        x \leftarrow\textsc{RLWECompressCoefficient}(\ck, c, k) \\
        \mu_k'' \leftarrow\textsc{ModifiedRLWEDecrypt}_{q,p}(\texttt{s}, x)
    \end{align*}
    then $\mu_k''$ is equal to the coefficient of $X^k$ in 
    \begin{align*}
        \mu'(X) = \textsc{RLWEDecrypt}(\sk, c)
    \end{align*}
\end{theorem}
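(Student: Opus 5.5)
The plan mirrors the proof of \Cref{thm:lwe-compress-correct}, with the negacyclic coefficient formula \Cref{eq:rlwe-extract-general} in place of the inner product. The first step is to read off from \Cref{alg:rlwe-compress-response} the integer plaintext that $x$ encrypts. By the additive homomorphism, $\texttt{ADec}(\addkey,x)$ equals the integer
\begin{align*}
Y = B[k] + \sum_{i=0}^{k} (q-A[k-i])\, S[i] + \sum_{i=k+1}^{N-1} A[N+k-i]\, S[i]
\end{align*}
reduced modulo $m$. Here every $A[\cdot]$, $B[k]$ and $S[i]$ is viewed as its representative in $\{0,\dots,q-1\}$.

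The second step is to rule out overflow in the plaintext space. Each summand is nonnegative, so $Y \geq 0$. For the upper bound, $B[k]<q$, each factor $q-A[k-i]$ and $A[N+k-i]$ is at most $q$, and $S[i]<q$. The two sums together have $N$ terms, so $Y \le q + Nq^2 < m$. Hence $Y \bmod m = Y$, and therefore $\mu^{**}_k = Y \bmod q$.

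The third step is to identify $Y \bmod q$ with the true phase coefficient. Modulo $q$, the term $(q-A[k-i])S[i]$ is congruent to $-A[k-i]S[i]$. So
\begin{align*}
Y \equiv B[k] - \sum_{i=0}^{k} A[k-i]S[i] + \sum_{i=k+1}^{N-1} A[N+k-i]S[i] \pmod q .
\end{align*}
This is exactly the $X^k$ coefficient of $B(X)-A(X)S(X)$ in $R_q$. The identification uses that $X^N=-1$: products $A[j]S[i]$ with $i+j=k+N$ wrap around with a sign flip, and this sign flip is what turns the second sum positive, as in \Cref{eq:rlwe-extract-general}. So $\mu^{**}_k=\mu^*[k]$.

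Finally, the rounding $\lfloor \cdot/\Delta\rceil$ is the same in \textsc{ModifiedRLWEDecrypt} and \textsc{RLWEDecrypt}, so $\mu''_k=\mu'[k]$.

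The only delicate point is the sign bookkeeping in the negacyclic convolution. I would justify it by expanding $A(X)S(X)=\sum_{i,j}A[j]S[i]X^{i+j}$ and collecting the terms with $i+j=k$ and with $i+j=k+N$. After that check, the argument is a direct transcription of the LWE case.
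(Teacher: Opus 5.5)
Your proposal is correct and follows the paper's proof essentially step for step. Like the paper, you identify the integer encrypted by $x$, bound it by $q+Nq^2<m$ to rule out wraparound modulo $m$, reduce modulo $q$ to recover the $X^k$ coefficient of $B(X)-A(X)S(X)$ in $R_q$, and observe that the coefficient-wise rounding is identical in both decryption procedures. Your explicit negacyclic expansion (collecting the terms with $i+j=k$ and $i+j=k+N$) and your check that $Y\ge 0$ fill in details the paper leaves implicit when it says ``which can be seen by expanding the equation.''
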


We provide the proof of \Cref{thm:rlwe-compress-correct} in \appsection{sec:prove-rlwe-compress}.
Similar to the case of LWE ciphertexts, if the coefficients of the RLWE secret key are binary, we can tighten the condition on $m$ in \Cref{thm:rlwe-compress-correct} such that $m > q + N q$.
The following corollary summarizes this fact.

\begin{corollary}
    If the coefficients of the secret key are binary and $m > q + N q$, \Cref{alg:rlwe-compress-response} produces a compressed ciphertext which decrypts to the coefficient of $X^k$ if decrypted using \textsc{ModifiedRLWEDecrypt}$_{q,p}$.
\end{corollary}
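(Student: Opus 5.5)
The plan mirrors the proof of \Cref{thm:lwe-compress-correct}, applied coefficient by coefficient. First I would use the coefficient expansion of the RLWE phase in \Cref{eq:rlwe-extract-general}. Because multiplication in $R_q=\ZZ_q[X]/(X^N+1)$ is negacyclic, the coefficient $\mu^*[k]$ is, modulo $q$,
\[
B[k]-\sum_{i=0}^{k}A[k-i]S[i]+\sum_{i=k+1}^{N-1}A[N+k-i]S[i].
\]
I would record this as a small lemma: expand $A(X)S(X)$ and reduce using $X^N=-1$. In the product, the terms with $i+j=k$ contribute with a plus sign, and those with $i+j=N+k$ contribute with a minus sign.

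Next, by the homomorphic properties of $\mathcal{E}_A$ ($\oplus$ and $\otimes$), the value $x$ produced by \textsc{RLWECompressCoefficient} decrypts, over $\ZZ_m$, to
\[
T=B[k]+\sum_{i=0}^{k}(q-A[k-i])S[i]+\sum_{i=k+1}^{N-1}A[N+k-i]S[i] \pmod m .
\]
The key step is to show that $T$, viewed as an integer, never wraps around modulo $m$. Every summand is nonnegative, since all entries lie in $[0,q)$ and each $q-A[\cdot]$ lies in $(0,q]$. So $0\le T\le q+Nq^2<m$. The case $q-A=q$ is harmless, because $q\cdot S[i]\le q^2$. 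Hence $\texttt{ADec}(\addkey,x)=T$ exactly, as an integer.

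Then I reduce modulo $q$. The $q\,S[i]$ terms vanish, which leaves exactly the expression for $\mu^*[k]$ above modulo $q$. Therefore $\mu^{**}_k=\mu^*[k]$. The remaining step of \textsc{ModifiedRLWEDecrypt}, namely rounding after dividing by $\Delta$, is identical to the $k$th coefficient step of \textsc{RLWEDecrypt}, so $\mu''_k=\mu'[k]$.

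The main obstacle I expect is getting the signs and indices of the negacyclic convolution right, and checking that they match the algorithm's two loops: the first loop uses $q-A[k-i]$ and the second uses $+A[N+k-i]$. I would verify this carefully with the $X^N=-1$ reduction. The overflow bound itself is routine once all summands are confirmed nonnegative.
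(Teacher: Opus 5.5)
Your overall structure (negacyclic coefficient expansion, homomorphic evaluation, no-wraparound bound, reduction mod $q$, identical rounding) matches the paper's proof of \Cref{thm:rlwe-compress-correct}. Your sign and index bookkeeping for the two loops is correct.

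\textbf{The gap.} It lies in the one step where this corollary differs from that theorem. You bound $0\le T\le q+Nq^2<m$, but the corollary only assumes $m>q+Nq$. For $q>1$ we have $q+Nq^2>q+Nq$, so $q+Nq^2<m$ does not follow from the hypothesis. As written, you have reproved \Cref{thm:rlwe-compress-correct} under its stronger hypothesis and never used that the key is binary. Your remark that the case $q-A=q$ is harmless ``because $q\cdot S[i]\le q^2$'' is exactly where the binary assumption should have entered.

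\textbf{The fix.} Use $S[i]\in\{0,1\}$ in the overflow bound. Each term $(q-A[k-i])S[i]$ is at most $q$, each term $A[N+k-i]S[i]$ is at most $q-1$, and there are $N$ such terms in total across the two loops. Since $B[k]\le q-1$,
\[
0\le T\le (q-1)+Nq<q+Nq<m .
\]
Hence $\texttt{ADec}(\addkey,x)=T$ as an integer, and the rest of your argument (reducing mod $q$ so the $q\,S[i]$ terms vanish, then rounding) goes through unchanged. This is precisely the modification the paper intends: its proof of \Cref{thm:rlwe-compress-correct}, with each summand bound $q\cdot q$ replaced by $q\cdot 1$.
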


\textit{Security.} A similar argument can be made about the security of compression over RLWE. Let $\mathcal{E}''$ denote the cryptosystem which is the combination of $\mathcal{E}_{RLWE}$ and $\mathcal{E}_{A}$. The following proposition holds regarding security.

\begin{proposition}[Security]
    If $\mathcal{E}_{RLWE}$ and $\mathcal{E}_A$ are semantically secure, then $\mathcal{E}''$ is also semantically secure.
\end{proposition}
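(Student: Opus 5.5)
Which statement is the last one? The paper ends with the Proposition (Security) for $\mathcal{E}''$, the composition of $\mathcal{E}_{RLWE}$ with $\mathcal{E}_A$, so that is the statement to prove.

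The plan is a hybrid argument in the style of Gentry's chaining argument, which the paper already invokes for the LWE case. A ciphertext of $\mathcal{E}''$ is an RLWE ciphertext $C=(A(X),B(X))$. The public material released alongside it consists of the public key of $\mathcal{E}_A$ (and of $\mathcal{E}_{RLWE}$, if it is used in public-key mode) and the compression key $\ck$, where $\ck[i]=\texttt{AEnc}(\addkey,S[i])$ for $i\in[N]$. Compression is a public deterministic function of $(\ck, C, k)$, so the compressed output adds nothing to the adversary's view. Semantic security of $\mathcal{E}''$ therefore means the following: for any two messages $\mu_0(X),\mu_1(X)\in R_p$ chosen by the adversary, the views $(\mathsf{pk},\ck,\textsc{RLWEEncrypt}(S,\mu_b))$ for $b=0,1$ are computationally indistinguishable.

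The hybrids are as follows.
\begin{itemize}
\item $H_0$ is the real game with $b=0$.
\item $H_1$ replaces $\ck$ by $\ck'[i]=\texttt{AEnc}(\addkey,0)$ for all $i$.
\item $H_2$ is $H_1$ with $b=1$.
\item $H_3$ restores $\ck$ and is the real game with $b=1$.
\end{itemize}

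For $H_0\approx H_1$, I reduce to the semantic security of $\mathcal{E}_A$, extended to $N$ encryptions by a standard hybrid over the coordinates, which loses a factor $N$. The reduction samples $S$ itself, because the RLWE and additive keys are sampled independently. It submits the pairs $(S[i],0)$ to the $\mathcal{E}_A$ challenger and receives either $\ck$ or $\ck'$. It then produces the RLWE challenge ciphertext itself using $S$. The step $H_2\approx H_3$ is symmetric.

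For $H_1\approx H_2$, the view no longer depends on $S$ except through the RLWE ciphertext. I reduce to the semantic security of $\mathcal{E}_{RLWE}$: the reduction samples the $\mathcal{E}_A$ key pair itself, generates $\ck'$ as encryptions of $0$, and forwards the RLWE challenge. Summing the four distinguishing advantages gives a negligible bound.

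The main obstacle is the independence of the keys, which is essential. The reduction in the first step needs to know $S$ while treating $\addkey$ as a challenge. The reduction in the middle step needs the opposite. This works only because there is no circularity: $S$ is never encrypted under itself, and $\addkey$ never appears under RLWE. I would state this explicitly as the acyclicity condition in Gentry's chaining lemma. A secondary subtlety arises if the adversary also sees compressed ciphertexts of RLWE ciphertexts it produces itself. Compression is public, so this is covered. In the symmetric-key setting, encryption queries can be answered by the reduction in the first step because it holds $S$, and by the middle step via the RLWE encryption oracle.
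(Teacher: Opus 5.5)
Your proof is correct and takes essentially the paper's approach. The paper gives no separate argument for the RLWE case: it says ``a similar argument can be made'' and points back to the LWE case, where it appeals to Gentry's result that chaining semantically secure schemes with independently sampled keys preserves semantic security. Your hybrids $H_0$ through $H_3$ are that chaining argument written out in full, and they make the key-independence (acyclicity) condition explicit.
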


\section{Proof of RLWE Compression}
\label{sec:prove-rlwe-compress}

\begin{proof}

Line 1 of \Cref{alg:rlwe-compress-response} computes 
$$
B[k] + \sum_{i=0}^{k} (q-A[k-i]) \cdot S[i] + \sum_{i=k+1}^{N-1} A[N+k-i] \cdot S[i]\
$$
encrypted under additive encryption, which is possible due to the linear properties.
We know that all coefficients of $A(X)$, $B(X)$, and $S(X)$ are elements in $\ZZ_q$, hence

    \begin{align*}
        B[k] + \left(\sum_{i=0}^{k} (q-A[k-i]) \cdot S[i]\right)
        + \left(\sum_{i=k+1}^{N-1} A[N+k-i] \cdot S[i]\right)\\
        \leq q + \left(\sum_{i=0}^{k} q \cdot q\right) + \left(\sum_{i=k+1}^{N-1} q \cdot q\right)
        = q + Nq^2 < m
    \end{align*}
so there is no overflow in the plaintext space of the additive cryptosystem.

{\footnotesize
    \begin{align*}
        & \mu^{**}_{k} = \texttt{ADec}_{s}(x) \mod q \\
        & = \left(\left(B[k] + \sum_{i=0}^{k} (q-A[k-i]) \cdot S[i] \right. \right.\\
        & ~~~~~~~~~~~~~~~~~~~~~~~~~~ \left.\left.+\sum_{i=k+1}^{N-1} A[N+k-i] \cdot S[i]\right) \mod m \right) \mod q \\
        & = \left( B[k] + \sum_{i=0}^{k} (q-A[k-i]) \cdot S[i] + \sum_{i=k+1}^{N-1} A[N+k-i] \cdot S[i]\right) \mod q \\
        & = B[k] - \sum_{i=0}^{k} A[k-i] \cdot S[i] + \sum_{i=k+1}^{N-1} A[N+k-i] \cdot S[i] \mod q
    \end{align*}
}%

which is equivalent to the $k^{th}$ coefficient of 
$$
    \mu^*(X) = B(X) - A(X) \cdot S(X) \mod R_q
$$
which can be seen by expanding the equation.
Given that line 16 of \Cref{alg:lwe-encrypt-decrypt} performs rounding coefficient-wise, it produces the same result as line 10 of \Cref{alg:rlwe-compress-response}.
\end{proof}

\section{Security \& Correctness of \protocol{}}
\label{appendix:proof}

In this section, we prove the correctness and security of \protocol{}.

\protocolCorrectness*

\begin{proof}

We know that $\ck_r$ denotes a valid ciphertext, if and only if $\ck_r\in(\ZZ^{*}_{m^2})^{d_0}$. 
This happens with probability, $(\phi(m^2)/m^2)^{d_0} > 1 - 2d_0/\sqrt{m}$, which is high for secure chioces of $m$.
Assuming that $\ck_r\in(\ZZ^{*}_{m^2})^{d_0}$, then assume $r = \textsc{PaillierDecrypt}(\paillierkey, \ck_r)$ so
\begin{align}
    \mu &= t + \textsc{PaillierDecrypt}(\paillierkey, \clienthint) \mod m\\
    &= b + \hint \cdot \ck_{o} + \hint \cdot \textsc{PaillierDecrypt}(\paillierkey, \ck_r) \mod m \\
    &= b + \hint \cdot (\ck_{o} + r) = b + \hint \cdot \sk \mod m = b + \hint \cdot \sk
\end{align}
Where the last equation comes from the fact that $ || b+\hint \cdot \sk ||_{\infty} < q + nq < m$, so it does overflow mod $m$. Hence, 
\begin{align}
    \mu \mod q
    & = b + \hint \cdot \sk \mod q \\
    & = \db^{T} \cdot \qu_0 -  \db^{T} \cdot A \cdot \sk \mod q\\
    & = \db^{T} (\Delta u_0 + e) = \Delta \db[i] + \db^{T} e 
\end{align}

Using the same analysis from the proof of SimplePIR~\cite[Appendix C.2] {henzingerOneServerPrice2023}, we can show that $||\db^{T} e||_\infty < \Delta / 2$ with probability $1-\delta$, assuming \Cref{eq:correctness-condition} holds.
Combining this step with the previous step, we can see that with probability $(1-\delta)(1 - 2d_0/\sqrt{m}) > 1 - \delta - 2d_0/\sqrt{m}$, we will have $f = \round{(\mu \mod q) p / q} = \db[i]$, which proves the theorem.
\end{proof}

\protocolSecurity*

\begin{proof}
    We denote $(q_h, \qu)=(\pk_P, \ck_r, \ck_o, \qu_0)$ for simplicity.
    To prove the theorem, we prove that the tuple, $(\pk_P, \ck_r, \ck_o, \qu_0)$, as generate by \protocol{} is indistinguishable from a random tuple.
    We will prove this via multiple hybrid arguments. For this, we define 5 distributions to generate the tuple and show that each consecutive pair of distributions are indistinguishable. We define each distribution by the modification to the original definition of the tuple. 

    \textbf{Dist.} $\mathbf{\mathcal{H}_0}$\textbf{:} The tuple is generated as described in \protocol{}
    
    \textbf{Dist.} $\mathbf{\mathcal{H}_1}$\textbf{:} Sample $\pt_r\sample\ZZ_m$ and set $\ck_r = \enc(\paillierkey, \pt_r)$
    
    \textbf{Dist.} $\mathbf{\mathcal{H}_2}$\textbf{:} Set $\ck_r = \enc(\paillierkey, \sk)$ and sample $\ck_o$ randomly
    
    \textbf{Dist.} $\mathbf{\mathcal{H}_3}$\textbf{:} Sample $\ck_r$ and $\ck_o$ randomly
    
    \textbf{Dist.} $\mathbf{\mathcal{H}_4}$\textbf{:} Sample $\qu_0$, $\ck_r$, and $\ck_o$

    We show that each consecutive pair of distributions are indistinguishable up to a negligible factor.

    The only difference between $\mathbf{\mathcal{H}_0}$ and $\mathbf{\mathcal{H}_1}$ is the probability of producing invalid ciphertexts in $\mathbf{\mathcal{H}_0}$, which is $2d_0/\sqrt{m}$ and is negligible in the security parameter for secure choice of $m$.

    $\mathbf{\mathcal{H}_1}$ and $\mathbf{\mathcal{H}_2}$ can be simulated by a simulator which $S$ which given input from $\mathbf{\mathcal{H}_2}$ computes $S(\pt_r, \ck_r, \ck_o) = (\pt_r, \pt_r - \ck_r, \ck_o + \pt_r)$. The output of $S$ is identical to the output of $\mathbf{\mathcal{H}_1}$.

    Indistinguishability between $\mathbf{\mathcal{H}_2}$ and $\mathbf{\mathcal{H}_3}$ reduces to the IND-CPA security of the Paillier encryption.

    Lastly, indistinguishability between $\mathbf{\mathcal{H}_3}$ and $\mathbf{\mathcal{H}_4}$ is identical to that of SimplePIR~\cite[Lemma C.2]{henzingerOneServerPrice2023}.

    Combining these steps proves that the correctly generated tuple is indistinguishable from random, up to a negligible parameter.

\end{proof}

%% file: references.bib
@inproceedings{10.1007/978-3-030-45721-1_3,
  title = {Private {{Information Retrieval}} with {{Sublinear Online Time}}},
  booktitle = {Advances in {{Cryptology}} -- {{EUROCRYPT}} 2020},
  author = {{Corrigan-Gibbs}, Henry and Kogan, Dmitry},
  editor = {Canteaut, Anne and Ishai, Yuval},
  year = {2020},
  pages = {44--75},
  publisher = {Springer International Publishing},
  address = {Cham},
  isbn = {978-3-030-45721-1}
}

@inproceedings{199325,
  title = {Unobservable {{Communication}} over {{Fully Untrusted Infrastructure}}},
  booktitle = {12th \{\vphantom\}{{USENIX}}\vphantom\{\} {{Symposium}} on {{Operating Systems Design}} and {{Implementation}} (\{\vphantom\}{{OSDI}}\vphantom\{\} 16)},
  author = {Angel, Sebastian and Setty, Srinath},
  year = {2016},
  month = nov,
  pages = {551--569},
  publisher = {\{USENIX\} Association},
  address = {Savannah, GA},
  isbn = {978-1-931971-33-1}
}

@article{aguilar-melchorXPIRPrivateInformation2016,
  title = {{{XPIR}} : {{Private Information Retrieval}} for {{Everyone}}},
  shorttitle = {{{XPIR}}},
  author = {{Aguilar-Melchor}, Carlos and Barrier, Joris and Fousse, Laurent and Killijian, Marc-Olivier},
  year = {2016},
  journal = {Proceedings on Privacy Enhancing Technologies},
  issn = {2299-0984},
  urldate = {2024-02-01}
}

@inproceedings{ahmadAddraMetadataprivateVoice2021,
  title = {Addra: {{Metadata-private}} Voice Communication over Fully Untrusted Infrastructure},
  shorttitle = {Addra},
  booktitle = {15th \{\vphantom\}{{USENIX}}\vphantom\{\} {{Symposium}} on {{Operating Systems Design}} and {{Implementation}} (\{\vphantom\}{{OSDI}}\vphantom\{\} 21)},
  author = {Ahmad, Ishtiyaque and Yang, Yuntian and Agrawal, Divyakant and Abbadi, Amr El and Gupta, Trinabh},
  year = {2021},
  urldate = {2023-09-12},
  isbn = {978-1-939133-22-9},
  langid = {english}
}

@article{ahmadPantheonPrivateRetrieval2022,
  title = {Pantheon: {{Private Retrieval}} from {{Public Key-Value Store}}},
  shorttitle = {Pantheon},
  author = {Ahmad, Ishtiyaque and Agrawal, Divyakant and Abbadi, Amr El and Gupta, Trinabh},
  year = {2022},
  month = dec,
  journal = {Proceedings of the VLDB Endowment},
  volume = {16},
  number = {4},
  pages = {643--656},
  issn = {2150-8097},
  doi = {10.14778/3574245.3574251},
  urldate = {2023-08-29}
}

@inproceedings{akhavanmahdaviLevelPrivateNonInteractive2023,
  title = {Level {{Up}}: {{Private Non-Interactive Decision Tree Evaluation}} Using {{Levelled Homomorphic Encryption}}},
  shorttitle = {Level {{Up}}},
  booktitle = {Proceedings of the 2023 {{ACM SIGSAC Conference}} on {{Computer}} and {{Communications Security}}},
  author = {Akhavan Mahdavi, Rasoul and Ni, Haoyan and Linkov, Dimitry and Kerschbaum, Florian},
  year = {2023},
  month = nov,
  series = {{{CCS}} '23},
  pages = {2945--2958},
  publisher = {Association for Computing Machinery},
  address = {New York, NY, USA},
  doi = {10.1145/3576915.3623095},
  urldate = {2024-01-27},
  isbn = {9798400700507}
}

@inproceedings{albadawiOpenFHEOpenSourceFully2022,
  title = {{{OpenFHE}}: {{Open-Source Fully Homomorphic Encryption Library}}},
  shorttitle = {{{OpenFHE}}},
  booktitle = {Proceedings of the 10th {{Workshop}} on {{Encrypted Computing}} \& {{Applied Homomorphic Cryptography}}},
  author = {Al Badawi, Ahmad and Bates, Jack and Bergamaschi, Flavio and Cousins, David Bruce and Erabelli, Saroja and Genise, Nicholas and Halevi, Shai and Hunt, Hamish and Kim, Andrey and Lee, Yongwoo and Liu, Zeyu and Micciancio, Daniele and Quah, Ian and Polyakov, Yuriy and R.V., Saraswathy and Rohloff, Kurt and Saylor, Jonathan and Suponitsky, Dmitriy and Triplett, Matthew and Vaikuntanathan, Vinod and Zucca, Vincent},
  year = {2022},
  month = nov,
  series = {{{WAHC}}'22},
  pages = {53--63},
  publisher = {Association for Computing Machinery},
  address = {New York, NY, USA},
  doi = {10.1145/3560827.3563379},
  urldate = {2023-09-29},
  isbn = {978-1-4503-9877-0}
}

@incollection{albrechtCiphersMPCFHE2015,
  title = {Ciphers for {{MPC}} and {{FHE}}},
  booktitle = {Advances in {{Cryptology}} -- {{EUROCRYPT}} 2015},
  author = {Albrecht, Martin R. and Rechberger, Christian and Schneider, Thomas and Tiessen, Tyge and Zohner, Michael},
  editor = {Oswald, Elisabeth and Fischlin, Marc},
  year = {2015},
  volume = {9056},
  pages = {430--454},
  publisher = {Springer Berlin Heidelberg},
  address = {Berlin, Heidelberg},
  doi = {10.1007/978-3-662-46800-5_17},
  urldate = {2022-01-15},
  isbn = {978-3-662-46799-2 978-3-662-46800-5},
  langid = {english}
}

@misc{albrechtConcreteHardnessLearning2015,
  title = {On the Concrete Hardness of {{Learning}} with {{Errors}}},
  author = {Albrecht, Martin R. and Player, Rachel and Scott, Sam},
  year = {2015},
  number = {2015/046},
  urldate = {2024-01-02}
}

@article{Ali2019CommunicationComputationTI,
  title = {Communication-{{Computation Trade-offs}} in {{PIR}}},
  author = {Ali, Asra and Lepoint, Tancr{\`e}de and Patel, S and Raykova, Mariana and Schoppmann, Phillipp and Seth, Karn and Yeo, Kevin},
  year = {2019},
  journal = {IACR Cryptol. ePrint Arch.},
  volume = {2019},
  pages = {1483}
}

@inproceedings{aliCommunicationComputationTradeoffs2021a,
  title = {Communication--{{Computation Trade-offs}} in {{PIR}}},
  booktitle = {30th {{USENIX Security Symposium}} ({{USENIX Security}} 21)},
  author = {Ali, Asra and Lepoint, Tancr{\`e}de and Patel, Sarvar and Raykova, Mariana and Schoppmann, Phillipp and Seth, Karn and Yeo, Kevin},
  year = {2021},
  pages = {1811--1828},
  urldate = {2023-09-28},
  isbn = {978-1-939133-24-3},
  langid = {english}
}

@inproceedings{angelPIRCompressedQueries2018a,
  title = {{{PIR}} with {{Compressed Queries}} and {{Amortized Query Processing}}},
  booktitle = {2018 {{IEEE Symposium}} on {{Security}} and {{Privacy}} ({{SP}})},
  author = {Angel, Sebastian and Chen, Hao and Laine, Kim and Setty, Srinath},
  year = {2018},
  month = may,
  pages = {962--979},
  publisher = {IEEE},
  address = {San Francisco, CA},
  doi = {10.1109/SP.2018.00062},
  urldate = {2020-08-14},
  isbn = {978-1-5386-4353-2},
  langid = {english}
}

@techreport{barkerRecommendationKeyManagement2020,
  title = {Recommendation for Key Management: Part 1 - General},
  shorttitle = {Recommendation for Key Management},
  author = {Barker, Elaine},
  year = {2020},
  month = may,
  number = {NIST SP 800-57pt1r5},
  pages = {NIST SP 800-57pt1r5},
  address = {Gaithersburg, MD},
  institution = {{National Institute of Standards and Technology}},
  doi = {10.6028/NIST.SP.800-57pt1r5},
  urldate = {2024-04-11}
}

@inproceedings{beckRandomizedDecryptionRD2015,
  title = {Randomized Decryption ({{RD}}) Mode of Operation for Homomorphic Cryptography - Increasing Encryption, Communication and Storage Efficiency},
  booktitle = {2015 {{IEEE Conference}} on {{Computer Communications Workshops}} ({{INFOCOM WKSHPS}})},
  author = {Beck, Martin},
  year = {2015},
  month = apr,
  pages = {220--226},
  doi = {10.1109/INFCOMW.2015.7179388},
  urldate = {2023-10-12}
}

@inproceedings{benalohDenseProbabilisticEncryption1994,
  title = {Dense {{Probabilistic Encryption}}},
  booktitle = {Proceedings of the Workshop on Selected Areas of Cryptography},
  author = {Benaloh, Josh},
  year = {1994},
  pages = {120--128},
  langid = {english}
}

@inproceedings{brakerskiEfficientFullyHomomorphic2011,
  title = {Efficient {{Fully Homomorphic Encryption}} from ({{Standard}}) {{LWE}}},
  booktitle = {2011 {{IEEE}} 52nd {{Annual Symposium}} on {{Foundations}} of {{Computer Science}}},
  author = {Brakerski, Zvika and Vaikuntanathan, Vinod},
  year = {2011},
  month = oct,
  pages = {97--106},
  publisher = {IEEE},
  address = {Palm Springs, CA, USA},
  doi = {10.1109/FOCS.2011.12},
  urldate = {2022-07-27},
  isbn = {978-0-7695-4571-4 978-1-4577-1843-4},
  langid = {english}
}

@inproceedings{brakerskiFullyHomomorphicEncryption2012,
  title = {Fully {{Homomorphic Encryption}} without {{Modulus Switching}} from {{Classical GapSVP}}},
  booktitle = {Advances in {{Cryptology}} -- {{CRYPTO}} 2012},
  author = {Brakerski, Zvika},
  editor = {{Safavi-Naini}, Reihaneh and Canetti, Ran},
  year = {2012},
  series = {Lecture {{Notes}} in {{Computer Science}}},
  pages = {868--886},
  publisher = {Springer},
  address = {Berlin, Heidelberg},
  doi = {10.1007/978-3-642-32009-5_50},
  isbn = {978-3-642-32009-5},
  langid = {english}
}

@inproceedings{brakerskiLeveledFullyHomomorphic2012,
  title = {({{Leveled}}) Fully Homomorphic Encryption without Bootstrapping},
  booktitle = {Proceedings of the 3rd {{Innovations}} in {{Theoretical Computer Science Conference}}},
  author = {Brakerski, Zvika and Gentry, Craig and Vaikuntanathan, Vinod},
  year = {2012},
  month = jan,
  series = {{{ITCS}} '12},
  pages = {309--325},
  publisher = {Association for Computing Machinery},
  address = {New York, NY, USA},
  doi = {10.1145/2090236.2090262},
  urldate = {2023-09-22},
  isbn = {978-1-4503-1115-1}
}

@inproceedings{brakerskiLeveragingLinearDecryption2019,
  title = {Leveraging {{Linear Decryption}}: {{Rate-1 Fully-Homomorphic Encryption}} and {{Time-Lock Puzzles}}},
  shorttitle = {Leveraging {{Linear Decryption}}},
  booktitle = {Theory of {{Cryptography}}},
  author = {Brakerski, Zvika and D{\"o}ttling, Nico and Garg, Sanjam and Malavolta, Giulio},
  editor = {Hofheinz, Dennis and Rosen, Alon},
  year = {2019},
  series = {Lecture {{Notes}} in {{Computer Science}}},
  pages = {407--437},
  publisher = {Springer International Publishing},
  address = {Cham},
  doi = {10.1007/978-3-030-36033-7_16},
  isbn = {978-3-030-36033-7},
  langid = {english}
}

@article{canteautStreamCiphersPractical2018,
  title = {Stream {{Ciphers}}: {{A Practical Solution}} for {{Efficient Homomorphic-Ciphertext Compression}}},
  shorttitle = {Stream {{Ciphers}}},
  author = {Canteaut, Anne and Carpov, Sergiu and Fontaine, Caroline and Lepoint, Tancr{\`e}de and {Naya-Plasencia}, Mar{\'i}a and Paillier, Pascal and Sirdey, Renaud},
  year = {2018},
  month = jul,
  journal = {Journal of Cryptology},
  volume = {31},
  number = {3},
  pages = {885--916},
  issn = {1432-1378},
  doi = {10.1007/s00145-017-9273-9},
  urldate = {2023-10-10},
  langid = {english}
}

@incollection{chenEfficientHomomorphicConversion2021,
  title = {Efficient {{Homomorphic Conversion Between}} ({{Ring}}) {{LWE Ciphertexts}}},
  booktitle = {Applied {{Cryptography}} and {{Network Security}}},
  author = {Chen, Hao and Dai, Wei and Kim, Miran and Song, Yongsoo},
  editor = {Sako, Kazue and Tippenhauer, Nils Ole},
  year = {2021},
  volume = {12726},
  pages = {460--479},
  publisher = {Springer International Publishing},
  address = {Cham},
  doi = {10.1007/978-3-030-78372-3_18},
  urldate = {2023-09-13},
  isbn = {978-3-030-78371-6 978-3-030-78372-3},
  langid = {english}
}

@article{chenLabeledPSIFully,
  title = {Labeled {{PSI}} from {{Fully Homomorphic Encryption}} with {{Malicious Security}}},
  author = {Chen, Hao and Huang, Zhicong and Laine, Kim and Rindal, Peter},
  urldate = {2021-10-12}
}

@inproceedings{chenLabeledPSIFully2018,
  title = {Labeled {{PSI}} from {{Fully Homomorphic Encryption}} with {{Malicious Security}}},
  booktitle = {Proceedings of the 2018 {{ACM SIGSAC Conference}} on {{Computer}} and {{Communications Security}}},
  author = {Chen, Hao and Huang, Zhicong and Laine, Kim and Rindal, Peter},
  year = {2018},
  month = oct,
  pages = {1223--1237},
  publisher = {ACM},
  address = {Toronto Canada},
  doi = {10.1145/3243734.3243836},
  urldate = {2022-01-17},
  isbn = {978-1-4503-5693-0},
  langid = {english}
}

@inproceedings{cheonHomomorphicEncryptionArithmetic2017a,
  title = {Homomorphic {{Encryption}} for {{Arithmetic}} of {{Approximate Numbers}}},
  booktitle = {Advances in {{Cryptology}} -- {{ASIACRYPT}} 2017},
  author = {Cheon, Jung Hee and Kim, Andrey and Kim, Miran and Song, Yongsoo},
  editor = {Takagi, Tsuyoshi and Peyrin, Thomas},
  year = {2017},
  series = {Lecture {{Notes}} in {{Computer Science}}},
  pages = {409--437},
  publisher = {Springer International Publishing},
  address = {Cham},
  doi = {10.1007/978-3-319-70694-8_15},
  isbn = {978-3-319-70694-8},
  langid = {english}
}

@article{chillottiTFHEFastFully2020,
  title = {{{TFHE}}: {{Fast Fully Homomorphic Encryption Over}} the {{Torus}}},
  shorttitle = {{{TFHE}}},
  author = {Chillotti, Ilaria and Gama, Nicolas and Georgieva, Mariya and Izabach{\`e}ne, Malika},
  year = {2020},
  month = jan,
  journal = {Journal of Cryptology},
  volume = {33},
  number = {1},
  pages = {34--91},
  issn = {0933-2790, 1432-1378},
  doi = {10.1007/s00145-019-09319-x},
  urldate = {2022-02-25},
  langid = {english}
}

@article{chorPrivateInformationRetrieval1998a,
  title = {Private Information Retrieval},
  author = {Chor, Benny and Kushilevitz, Eyal and Goldreich, Oded and Sudan, Madhu},
  year = {1998},
  month = nov,
  journal = {Journal of the ACM},
  volume = {45},
  number = {6},
  pages = {965--981},
  issn = {0004-5411},
  doi = {10.1145/293347.293350},
  urldate = {2023-11-13}
}

@inproceedings{choTranscipheringFrameworkApproximate2021a,
  title = {Transciphering {{Framework}} for {{Approximate Homomorphic Encryption}}},
  booktitle = {Advances in {{Cryptology}} -- {{ASIACRYPT}} 2021},
  author = {Cho, Jihoon and Ha, Jincheol and Kim, Seongkwang and Lee, Byeonghak and Lee, Joohee and Lee, Jooyoung and Moon, Dukjae and Yoon, Hyojin},
  editor = {Tibouchi, Mehdi and Wang, Huaxiong},
  year = {2021},
  series = {Lecture {{Notes}} in {{Computer Science}}},
  pages = {640--669},
  publisher = {Springer International Publishing},
  address = {Cham},
  doi = {10.1007/978-3-030-92078-4_22},
  isbn = {978-3-030-92078-4},
  langid = {english}
}

@inproceedings{corrigan-gibbsSingleServerPrivateInformation2022,
  title = {Single-{{Server Private Information Retrieval}} with~{{Sublinear Amortized Time}}},
  booktitle = {Advances in {{Cryptology}} -- {{EUROCRYPT}} 2022: 41st {{Annual International Conference}} on the {{Theory}} and {{Applications}} of {{Cryptographic Techniques}}, {{Trondheim}}, {{Norway}}, {{May}} 30 -- {{June}} 3, 2022, {{Proceedings}}, {{Part II}}},
  author = {{Corrigan-Gibbs}, Henry and Henzinger, Alexandra and Kogan, Dmitry},
  year = {2022},
  month = may,
  pages = {3--33},
  publisher = {Springer-Verlag},
  address = {Berlin, Heidelberg},
  doi = {10.1007/978-3-031-07085-3_1},
  urldate = {2024-01-27},
  isbn = {978-3-031-07084-6}
}

@inproceedings{damgardGeneralisationSimplificationApplications2001a,
  title = {A {{Generalisation}}, a {{Simplification}} and {{Some Applications}} of {{Paillier}}'s {{Probabilistic Public-Key System}}},
  booktitle = {Public {{Key Cryptography}}},
  author = {Damg{\aa}rd, Ivan and Jurik, Mads},
  editor = {Kim, Kwangjo},
  year = {2001},
  series = {Lecture {{Notes}} in {{Computer Science}}},
  pages = {119--136},
  publisher = {Springer},
  address = {Berlin, Heidelberg},
  doi = {10.1007/3-540-44586-2_9},
  isbn = {978-3-540-44586-9},
  langid = {english}
}

@article{davidsonFrodoPIRSimpleScalable2023,
  title = {{{FrodoPIR}}: {{Simple}}, {{Scalable}}, {{Single-Server Private Information Retrieval}}},
  shorttitle = {{{FrodoPIR}}},
  author = {Davidson, Alex and Pestana, Gon{\c c}alo and Celi, Sof{\'i}a},
  year = {2023},
  month = jan,
  journal = {Proceedings on Privacy Enhancing Technologies},
  volume = {2023},
  number = {1},
  pages = {365--383},
  issn = {2299-0984},
  doi = {10.56553/popets-2023-0022},
  urldate = {2023-05-16},
  langid = {english}
}

@misc{dobraunigPastaCaseHybrid2021,
  title = {Pasta: {{A Case}} for {{Hybrid Homomorphic Encryption}}},
  shorttitle = {Pasta},
  author = {Dobraunig, Christoph and Grassi, Lorenzo and Helminger, Lukas and Rechberger, Christian and Schofnegger, Markus and Walch, Roman},
  year = {2021},
  number = {2021/731},
  urldate = {2023-09-13}
}

@incollection{dobraunigRastaCipherLow2018,
  title = {Rasta: {{A Cipher}} with {{Low ANDdepth}} and {{Few ANDs}} per {{Bit}}},
  shorttitle = {Rasta},
  booktitle = {Advances in {{Cryptology}} -- {{CRYPTO}} 2018},
  author = {Dobraunig, Christoph and Eichlseder, Maria and Grassi, Lorenzo and Lallemand, Virginie and Leander, Gregor and List, Eik and Mendel, Florian and Rechberger, Christian},
  editor = {Shacham, Hovav and Boldyreva, Alexandra},
  year = {2018},
  volume = {10991},
  pages = {662--692},
  publisher = {Springer International Publishing},
  address = {Cham},
  doi = {10.1007/978-3-319-96884-1_22},
  urldate = {2022-01-15},
  isbn = {978-3-319-96883-4 978-3-319-96884-1},
  langid = {english}
}

@inproceedings{ducasFHEWBootstrappingHomomorphic2015,
  title = {{{FHEW}}: {{Bootstrapping Homomorphic Encryption}} in {{Less Than}} a {{Second}}},
  shorttitle = {{{FHEW}}},
  booktitle = {Advances in {{Cryptology}} -- {{EUROCRYPT}} 2015},
  author = {Ducas, L{\'e}o and Micciancio, Daniele},
  editor = {Oswald, Elisabeth and Fischlin, Marc},
  year = {2015},
  series = {Lecture {{Notes}} in {{Computer Science}}},
  pages = {617--640},
  publisher = {Springer},
  address = {Berlin, Heidelberg},
  doi = {10.1007/978-3-662-46800-5_24},
  isbn = {978-3-662-46800-5},
  langid = {english}
}

@inproceedings{elgamalPublicKeyCryptosystem1985,
  title = {A {{Public Key Cryptosystem}} and a {{Signature Scheme Based}} on {{Discrete Logarithms}}},
  booktitle = {Advances in {{Cryptology}}},
  author = {ElGamal, Taher},
  editor = {Blakley, George Robert and Chaum, David},
  year = {1985},
  series = {Lecture {{Notes}} in {{Computer Science}}},
  pages = {10--18},
  publisher = {Springer},
  address = {Berlin, Heidelberg},
  doi = {10.1007/3-540-39568-7_2},
  isbn = {978-3-540-39568-3},
  langid = {english}
}

@article{fan2012somewhat,
  title = {Somewhat {{Practical Fully Homomorphic Encryption}}},
  author = {Fan, Junfeng and Vercauteren, Frederik},
  year = {2012},
  journal = {Proceedings of the 15th international conference on Practice and Theory in Public Key Cryptography},
  volume = {2012},
  pages = {1--16},
  publisher = {Citeseer}
}

@inproceedings{gentryCompressibleFHEApplications2019,
  title = {Compressible {{FHE}} with {{Applications}} to {{PIR}}},
  booktitle = {Theory of {{Cryptography}}},
  author = {Gentry, Craig and Halevi, Shai},
  editor = {Hofheinz, Dennis and Rosen, Alon},
  year = {2019},
  series = {Lecture {{Notes}} in {{Computer Science}}},
  pages = {438--464},
  publisher = {Springer International Publishing},
  address = {Cham},
  doi = {10.1007/978-3-030-36033-7_17},
  isbn = {978-3-030-36033-7},
  langid = {english}
}

@phdthesis{gentryFullyHomomorphicEncryption2009,
  title = {A Fully Homomorphic Encryption Scheme},
  author = {Gentry, Craig},
  year = {2009},
  address = {Stanford, CA, USA},
  school = {Stanford University}
}

@inproceedings{henzingerOneServerPrice2023,
  title = {One {{Server}} for the {{Price}} of {{Two}}: {{Simple}} and {{Fast}} \{\vphantom\}{{Single-Server}}\vphantom\{\} {{Private Information Retrieval}}},
  shorttitle = {One {{Server}} for the {{Price}} of {{Two}}},
  booktitle = {32nd {{USENIX Security Symposium}} ({{USENIX Security}} 23)},
  author = {Henzinger, Alexandra and Hong, Matthew M. and {Corrigan-Gibbs}, Henry and Meiklejohn, Sarah and Vaikuntanathan, Vinod},
  year = {2023},
  pages = {3889--3905},
  urldate = {2023-10-17},
  isbn = {978-1-939133-37-3},
  langid = {english}
}

@phdthesis{huImprovingEfficiencyHomomorphic2013,
  title = {Improving the {{Efficiency}} of {{Homomorphic Encryption Schemes}}},
  author = {Hu, Yin},
  year = {2013},
  langid = {english},
  school = {Worcester Polytechnic Institute}
}

@article{koganPrivateBlocklistLookupsa,
  title = {Private {{Blocklist Lookups}} with {{Checklist}}},
  author = {Kogan, Dmitry and {Corrigan-Gibbs}, Henry},
  urldate = {2021-10-09},
  isbn = {9781939133243}
}

@misc{liHintlessSingleServerPrivate2023,
  title = {Hintless {{Single-Server Private Information Retrieval}}},
  author = {Li, Baiyu and Micciancio, Daniele and Raykova, Mariana and {Schultz-Wu}, Mark},
  year = {2023},
  number = {2023/1733},
  urldate = {2024-04-21}
}

@inproceedings{liProtocolsCheckingCompromised2019a,
  title = {Protocols for {{Checking Compromised Credentials}}},
  booktitle = {Proceedings of the 2019 {{ACM SIGSAC Conference}} on {{Computer}} and {{Communications Security}}},
  author = {Li, Lucy and Pal, Bijeeta and Ali, Junade and Sullivan, Nick and Chatterjee, Rahul and Ristenpart, Thomas},
  year = {2019},
  month = nov,
  series = {{{CCS}} '19},
  pages = {1387--1403},
  publisher = {Association for Computing Machinery},
  address = {New York, NY, USA},
  doi = {10.1145/3319535.3354229},
  urldate = {2023-05-29},
  isbn = {978-1-4503-6747-9}
}

@inproceedings{lyubashevskyIdealLatticesLearning2010,
  title = {On {{Ideal Lattices}} and {{Learning}} with {{Errors}} over {{Rings}}},
  booktitle = {Advances in {{Cryptology}} -- {{EUROCRYPT}} 2010},
  author = {Lyubashevsky, Vadim and Peikert, Chris and Regev, Oded},
  editor = {Gilbert, Henri},
  year = {2010},
  series = {Lecture {{Notes}} in {{Computer Science}}},
  pages = {1--23},
  publisher = {Springer},
  address = {Berlin, Heidelberg},
  doi = {10.1007/978-3-642-13190-5_1},
  isbn = {978-3-642-13190-5},
  langid = {english}
}

@inproceedings{mahdaviConstantweightPIRSingleround2022,
  title = {Constant-Weight {{PIR}}: {{Single-round Keyword PIR}} via {{Constant-weight Equality Operators}}},
  shorttitle = {Constant-Weight \{\vphantom\}{{PIR}}\vphantom\{\}},
  booktitle = {31st {{USENIX Security Symposium}} ({{USENIX Security}} 22)},
  author = {Mahdavi, Rasoul Akhavan and Kerschbaum, Florian},
  year = {2022},
  pages = {1723--1740},
  urldate = {2023-05-12},
  isbn = {978-1-939133-31-1},
  langid = {english}
}

@misc{mahdaviPEPSIPracticallyEfficient2023,
  title = {{{PEPSI}}: {{Practically Efficient Private Set Intersection}} in the {{Unbalanced Setting}}},
  shorttitle = {{{PEPSI}}},
  author = {Mahdavi, Rasoul Akhavan and Lukas, Nils and Ebrahimianghazani, Faezeh and Humphries, Thomas and Kacsmar, Bailey and Premkumar, John and Li, Xinda and Oya, Simon and Amjadian, Ehsan and Kerschbaum, Florian},
  year = {2023},
  month = oct,
  number = {arXiv:2310.14565},
  eprint = {2310.14565},
  primaryclass = {cs},
  publisher = {arXiv},
  doi = {10.48550/arXiv.2310.14565},
  urldate = {2024-02-29},
  archiveprefix = {arxiv}
}

@inproceedings{menonSPIRALFastHighRate2022,
  title = {{{SPIRAL}}: {{Fast}}, {{High-Rate Single-Server PIR}} via {{FHE Composition}}},
  shorttitle = {{{SPIRAL}}},
  booktitle = {2022 {{IEEE Symposium}} on {{Security}} and {{Privacy}} ({{SP}})},
  author = {Menon, Samir Jordan and Wu, David J.},
  year = {2022},
  month = may,
  pages = {930--947},
  publisher = {IEEE},
  address = {San Francisco, CA, USA},
  doi = {10.1109/SP46214.2022.9833700},
  urldate = {2023-05-12},
  isbn = {978-1-66541-316-9},
  langid = {english}
}

@misc{menonYPIRHighThroughputSingleServer2024,
  title = {{{YPIR}}: {{High-Throughput Single-Server PIR}} with {{Silent Preprocessing}}},
  shorttitle = {{{YPIR}}},
  author = {Menon, Samir Jordan and Wu, David J.},
  year = {2024},
  number = {2024/270},
  urldate = {2024-04-21}
}

@misc{MicrosoftSEALRelease2023,
  title = {Microsoft {{SEAL}} (Release 4.1)},
  year = {2023},
  month = jan
}

@inproceedings{mugheesOnionPIRResponseEfficient2021,
  title = {{{OnionPIR}}: {{Response Efficient Single-Server PIR}}},
  shorttitle = {{{OnionPIR}}},
  booktitle = {Proceedings of the 2021 {{ACM SIGSAC Conference}} on {{Computer}} and {{Communications Security}}},
  author = {Mughees, Muhammad Haris and Chen, Hao and Ren, Ling},
  year = {2021},
  month = nov,
  series = {{{CCS}} '21},
  pages = {2292--2306},
  publisher = {Association for Computing Machinery},
  address = {New York, NY, USA},
  doi = {10.1145/3460120.3485381},
  urldate = {2023-06-16},
  isbn = {978-1-4503-8454-4}
}

@inproceedings{naehrigCanHomomorphicEncryption2011a,
  title = {Can Homomorphic Encryption Be Practical?},
  booktitle = {Proceedings of the 3rd {{ACM}} Workshop on {{Cloud}} Computing Security Workshop},
  author = {Naehrig, Michael and Lauter, Kristin and Vaikuntanathan, Vinod},
  year = {2011},
  month = oct,
  series = {{{CCSW}} '11},
  pages = {113--124},
  publisher = {Association for Computing Machinery},
  address = {New York, NY, USA},
  doi = {10.1145/2046660.2046682},
  urldate = {2023-10-11},
  isbn = {978-1-4503-1004-8}
}

@inproceedings{paillierPublicKeyCryptosystemsBased1999,
  title = {Public-{{Key Cryptosystems Based}} on {{Composite Degree Residuosity Classes}}},
  booktitle = {Advances in {{Cryptology}} --- {{EUROCRYPT}} '99},
  author = {Paillier, Pascal},
  editor = {Stern, Jacques},
  year = {1999},
  series = {Lecture {{Notes}} in {{Computer Science}}},
  pages = {223--238},
  publisher = {Springer},
  address = {Berlin, Heidelberg},
  doi = {10.1007/3-540-48910-X_16},
  isbn = {978-3-540-48910-8},
  langid = {english}
}

@inproceedings{patelDonBeDense2023,
  title = {\{\vphantom\}{{Don}}'t\vphantom\{\} Be {{Dense}}: {{Efficient Keyword}} \{\vphantom\}{{PIR}}\vphantom\{\} for {{Sparse Databases}}},
  shorttitle = {\{\vphantom\}{{Don}}'t\vphantom\{\} Be {{Dense}}},
  booktitle = {32nd {{USENIX Security Symposium}} ({{USENIX Security}} 23)},
  author = {Patel, Sarvar and Seo, Joon Young and Yeo, Kevin},
  year = {2023},
  pages = {3853--3870},
  urldate = {2024-04-21},
  isbn = {978-1-939133-37-3},
  langid = {english}
}

@article{patelPrivateStatefulInformation2018,
  title = {Private {{Stateful Infor-mation Retrieval}}},
  author = {Patel, Sarvar and Persiano, Giuseppe and Yeo, Kevin},
  year = {2018},
  journal = {ACM SIGSAC Conference on Computer \& Communications Security},
  volume = {18},
  pages = {18},
  publisher = {ACM},
  doi = {10.1145/3243734.3243821},
  urldate = {2021-10-18},
  isbn = {9781450356930}
}

@misc{PrivateStatefulInformation,
  title = {Private {{Stateful Information Retrieval}} {\textbar} {{Proceedings}} of the 2018 {{ACM SIGSAC Conference}} on {{Computer}} and {{Communications Security}}},
  urldate = {2023-09-19},
  howpublished = {https://dl.acm.org/doi/10.1145/3243734.3243821}
}

@article{regevLatticesLearningErrors2009,
  title = {On Lattices, Learning with Errors, Random Linear Codes, and Cryptography},
  author = {Regev, Oded},
  year = {2009},
  month = sep,
  journal = {Journal of the ACM},
  volume = {56},
  number = {6},
  pages = {1--40},
  issn = {0004-5411, 1557-735X},
  doi = {10.1145/1568318.1568324},
  urldate = {2023-09-28},
  langid = {english}
}

@inproceedings{Thomas2019ProtectingAF,
  title = {Protecting Accounts from Credential Stuffing with Password Breach Alerting},
  booktitle = {{{USENIX Security Symposium}}},
  author = {Thomas, Kurt and Pullman, Jennifer and Yeo, Kevin and Raghunathan, A and Kelley, Patrick Gage and Invernizzi, L and Benko, B and Pietraszek, Tadek and Patel, S and Boneh, D and Bursztein, Elie},
  year = {2019}
}

@misc{zamaConcreteTFHECompiler2022,
  title = {Concrete: {{TFHE Compiler}} That Converts Python Programs into {{FHE}} Equivalent},
  author = {Zama},
  year = {2022}
}

@INPROCEEDINGS{piano,
  author={Zhou, Mingxun and Park, Andrew and Zheng, Wenting and Shi, Elaine},
  booktitle={2024 IEEE Symposium on Security and Privacy (SP)}, 
  title={Piano: Extremely Simple, Single-Server PIR with Sublinear Server Computation}, 
  year={2024},
  volume={},
  number={},
  pages={4296-4314},
  doi={10.1109/SP54263.2024.00055}}

@inproceedings{thorpir,
  author = {Fisch, Ben and Lazzaretti, Arthur and Liu, Zeyu and Papamanthou, Charalampos},
  title = {ThorPIR: Single Server PIR via Homomorphic Thorp Shuffles},
  year = {2024},
  isbn = {9798400706363},
  publisher = {Association for Computing Machinery},
  address = {New York, NY, USA},
  url = {https://doi.org/10.1145/3658644.3690326},
  doi = {10.1145/3658644.3690326},
  booktitle = {Proceedings of the 2024 on ACM SIGSAC Conference on Computer and Communications Security},
  pages = {1448–1462},
  numpages = {15},
  location = {Salt Lake City, UT, USA},
  series = {CCS '24}
}

@inproceedings{10.1145/3658644.3690266,
  author = {Ren, Ling and Mughees, Muhammad Haris and Sun, I},
  title = {Simple and Practical Amortized Sublinear Private Information Retrieval using Dummy Subsets},
  year = {2024},
  isbn = {9798400706363},
  publisher = {Association for Computing Machinery},
  address = {New York, NY, USA},
  url = {https://doi.org/10.1145/3658644.3690266},
  doi = {10.1145/3658644.3690266},
  booktitle = {Proceedings of the 2024 on ACM SIGSAC Conference on Computer and Communications Security},
  pages = {1420–1433},
  numpages = {14},
  location = {Salt Lake City, UT, USA},
  series = {CCS '24}
}

@inproceedings{10.1145/3658644.3690328,
    author = {Burton, Alexander and Menon, Samir Jordan and Wu, David J.},
    title = {Respire: High-Rate PIR for Databases with Small Records},
    year = {2024},
    isbn = {9798400706363},
    publisher = {Association for Computing Machinery},
    address = {New York, NY, USA},
    url = {https://doi.org/10.1145/3658644.3690328},
    doi = {10.1145/3658644.3690328},
    booktitle = {Proceedings of the 2024 on ACM SIGSAC Conference on Computer and Communications Security},
    pages = {1463–1477},
    numpages = {15},
    location = {Salt Lake City, UT, USA},
    series = {CCS '24}
}

@INPROCEEDINGS {mazmudar2024peer2pirprivatequeriesipfs,
    author = { Mazmudar, Miti and Veitch, Shannon and Mahdavi, Rasoul Akhavan },
    booktitle = { 2025 IEEE Symposium on Security and Privacy (SP) },
    title = {{ Peer2PIR: Private Queries for IPFS }},
    year = {2025},
    volume = {},
    ISSN = {},
    pages = {4438-4456},
    doi = {10.1109/SP61157.2025.00231},
    url = {https://doi.ieeecomputersociety.org/10.1109/SP61157.2025.00231},
    publisher = {IEEE Computer Society},
    address = {Los Alamitos, CA, USA},
    month =May
}

@misc{optOutSCTDeBlasio,
      title={Opt-out SCT auditing in Chrome.}, 
      author={Joe DeBlasio},
      url={https://docs.google.com/document/d/16GQ7iN3kB46GSW5b-sfH5MO3nKSYyEb77YsM7TMZGE/edit.}, 
}

@misc{lattigo,
    title = {Lattigo v6},
    howpublished = {Online: \url{https://github.com/tuneinsight/lattigo}},
    month = Aug,
    year = 2024,
    note = {EPFL-LDS, Tune Insight SA}
}

@inproceedings{10.1007/978-3-540-85174-5_31,
author = {Peikert, Chris and Vaikuntanathan, Vinod and Waters, Brent},
title = {A Framework for Efficient and Composable Oblivious Transfer},
year = {2008},
isbn = {9783540851738},
publisher = {Springer-Verlag},
address = {Berlin, Heidelberg},
url = {https://doi.org/10.1007/978-3-540-85174-5_31},
doi = {10.1007/978-3-540-85174-5_31},
booktitle = {Proceedings of the 28th Annual Conference on Cryptology: Advances in Cryptology},
pages = {554–571},
numpages = {18},
location = {Santa Barbara, CA, USA},
series = {CRYPTO 2008}
}

@inproceedings{10.1145/2213977.2214086,
author = {L\'{o}pez-Alt, Adriana and Tromer, Eran and Vaikuntanathan, Vinod},
title = {On-the-fly multiparty computation on the cloud via multikey fully homomorphic encryption},
year = {2012},
isbn = {9781450312455},
publisher = {Association for Computing Machinery},
address = {New York, NY, USA},
url = {https://doi.org/10.1145/2213977.2214086},
doi = {10.1145/2213977.2214086},
booktitle = {Proceedings of the Forty-Fourth Annual ACM Symposium on Theory of Computing},
pages = {1219–1234},
numpages = {16},
location = {New York, New York, USA},
series = {STOC '12}
}
